\documentclass[12pt]{article}
\usepackage{mathpazo}
\usepackage[latin9]{inputenc}
\usepackage{xcolor}
\usepackage{verbatim}
\usepackage{amsmath}
\usepackage{amsthm}
\usepackage{amssymb}
\usepackage{graphicx}
\usepackage{setspace}
\usepackage{wasysym}
\makeatletter
\theoremstyle{definition}
\newtheorem{defn}{\protect\definitionname}
\theoremstyle{plain}
\newtheorem{thm}{\protect\theoremname}
\theoremstyle{remark}
\newtheorem{rem}{\protect\remarkname}
\theoremstyle{definition}
 \newtheorem{example}{\protect\examplename}
\theoremstyle{plain}
\newtheorem{cor}{\protect\corollaryname}

\usepackage{calc}
\usepackage{amsfonts}
\usepackage{xcolor}
\providecommand{\U}[1]{\protect\rule{.1in}{.1in}}

\providecommand{\corollaryname}{Corollary}
\providecommand{\definitionname}{Definition}
\providecommand{\examplename}{Example}

\providecommand{\remarkname}{Remark}
\providecommand{\theoremname}{Theorem}

\DeclareMathOperator*{\argmax}{argmax}

\ifdefined\showcaptionsetup
 
\fi

\makeatother

\providecommand{\corollaryname}{Corollary}
\providecommand{\definitionname}{Definition}
\providecommand{\examplename}{Example}
\providecommand{\remarkname}{Remark}
\providecommand{\theoremname}{Theorem}

\begin{document}
\title{\vspace{-2cm}
 Identifying Behavioral Types}
\author{Christopher Kops, Paola Manzini, Marco Mariotti, Illia Pasichnichenko\thanks{Kops: Maastricht University, the Netherlands (e-mail: j.kops@maastrichtuniversity.nl);
Manzini: University of Bristol, United Kingdom (email: p.manzini@bristol.ac.uk)
and IZA Bonn; Mariotti: Queen Mary University of London, United Kingdom
(email: m.mariotti@qmul.ac.uk) and Deakin University; Pasichnichenko:
University of Sussex, United Kingdom (email: i.pasichnichenko@sussex.ac.uk).
We thank audiences at Sussex, Maastricht, Karlsruhe, the 11th BRIC
Conference at ITAM, and the 2025 Unforeseen Contingencies Workshop
at UGA.}}

\maketitle
\vspace{-0.5cm}

\begin{abstract}
We study identification in models of aggregate choice generated by
unobserved behavioral types. An analyst observes only aggregate choice
behavior, while the population distribution of types and their type-level
choice patterns are latent. Assuming only minimal and purely qualitative
prior knowledge of the process generating type-level choice probabilities,
we characterize necessary and sufficient conditions for identifiability.
Identification obtains if and only if the data exhibit sufficient
cross-type behavioral heterogeneity, which we characterize equivalently
through combinatorial matching conditions between types and alternatives,
and through algebraic properties of the matrices mapping type-level
to aggregate choice behavior.\\
 \textbf{JEL codes:} D81, D83, D91\\
 \textbf{Keywords:} revealed preferences, stochastic choice, bounded
rationality, attention, characteristics, incomplete preferences.
\end{abstract}

\section{Introduction}

\label{sec:intro} Latent factors such as preferences, attention,
and cognitive capacity influence choice behavior. These factors are
typically heterogeneously distributed---be it across individuals
in a population or within a single decision maker observed repeatedly
over time. This heterogeneity then gives rise to various \emph{behavioral
types}. Whenever the individual choices of these types remain unobservable---for
example because of privacy protection, anonymization, or the sheer
scale of data---they may still produce observable choice data on
the aggregate level. This raises a fundamental question: when is it
possible to infer the distribution of behavioral types, or their type-specific
choice patterns, from aggregate choice data alone? This paper characterizes
conditions under which such inferences can be drawn, assuming minimal
and purely qualitative prior knowledge of the process generating type-level
choice probabilities.

To illustrate the motivation, consider a policymaker contemplating
a stimulus that expands the set of options available to consumers.
Assessing the effectiveness of such a policy requires knowledge of
the distribution of consumer types. Some types may attend to the full
range of available options, others may focus only on a subset, and
still others may consistently adhere to their current choice. The
overall impact of the policy depends both on the prevalence of these
behavioral types and on the manner in which each type responds to
the expanded choice set. %
{} %

We study a general framework of stochastic choice. The analyst observes
choice probabilities (interpreted as idealized empirical \emph{shares})
over a finite set of $n$ alternatives. These shares result from the
aggregation of the unobserved stochastic choices made by a finite
number of behavioral types. Formally, the core framework is captured
by the equation
\begin{equation}
p=M\pi\label{eq:CoreModel}
\end{equation}
where $p$ is an observed $n\times1$ vector of choice probabilities
over the alternatives listed in a given order, and $M$ is an $n\times r$
matrix describing the partially unknown choice probabilities of $r$
\emph{types}. A type $t$ is a probability distribution over the alternatives.
The entry $M\left(k,t\right)$ of the matrix $M$ denotes the probability
with which type $t$ chooses alternative $k$. A theory of choice
specifies how types choose, that is, the entries in $M$. Finally,
$\pi$ is an unobserved $r\times1$ vector of probabilities expressing
the distribution of types. Since often stochastic choice data are
observed in multiple occasions of choice, we also consider an extension
of the model in Equation \eqref{eq:CoreModel} in this direction.

A key feature of our approach is that we impose only minimal informational
requirements on the analyst\textquoteright s knowledge of the choice
process. Specifically, the analyst is assumed to have prior knowledge
only of the \emph{possibility relation} associated with type-level
choice behavior---namely, which types may choose a given alternative
with positive probability and which types cannot possibly do so. Such
information is often implicit in the definition of behavioral types
in given contexts: vegan consumers do not choose animal products;
risk-averters steer clear of junk bonds; information avoiders omit
health screenings or ignore adverse media sources. At a theoretical
level, behavioral models may similarly restrict the support of type-level
randomization, for instance by allowing randomization only within
a consideration set or among maximal elements of an incomplete preference
relation.\footnote{Concrete examples are studied in more detail in Section \ref{sec:Applications}.}

Our focus is on how the data in $p$ can be used to identify the unknown
parameters---namely, the elements of $\pi$ or the positive elements
of $M$.\footnote{In the next section we clarify the notion of genericity used in the
paper.} Under the informational assumption discussed above, we derive identification
results in three settings. First, we characterize identification of
the type distribution $\pi$ without imposing further restrictions
on the structure of type-level choice behavior. 

Second, we specialize the framework to a type--state model, in which
the stochastic component of each type\textquoteright s behavior depends
on the realization of a common random state that is unobserved by
the analyst. In this setting, aggregate choice probabilities arise
as weighted sums of state-dependent choice matrices, according to
the equation
\[
p=\left(\sum_{a\in A}f(a)M^{a}\right)\pi
\]
where $A$ is a set of states, $f\left(a\right)$ is the probability
of state $a$, and $M^{a}$ is a matrix of $0-1$ type-level choice
probabilities that applies in state $a$. In this model, we obtain
a further set of conditions for the identification of $\pi$.

Finally, in Section \ref{sec:Inference-from-multiple}, we demonstrate
that observing choices across multiple occasions strengthens identification,
as it allows the analyst to exploit cross-temporal variation. For
this purpose, we combine the new one-shot identification conditions
with a fundamental result from tensor decompositions. This leads to
sufficient conditions for the identification of both the type distribution
and the type-level choice probabilities.

Our identification results can be broadly understood as formalizing
a requirement of sufficient \emph{behavioral heterogeneity}. They
fall into two main classes. The first class provides a combinatorial
characterization, based on the existence of suitable \emph{matchings}
between types and alternatives that are consistent with the possibility
relation. These matchings capture heterogeneity in the sense that
different types can, in principle, be associated with distinct choices
when there is sufficient behavioral richness. The second class provides
a geometric and algebraic counterpart to this analysis. We establish
new results on the generic invertibility of weighted sums of state-dependent
choice matrices, formulated as restrictions on the nullspaces of the
matrices $M^{a}$. These algebraic conditions clarify \emph{why} identification
failures arise---by making explicit the directions in aggregate choice
space along which cancellations occur---and admit a natural interpretation
as requirements of behavioral heterogeneity. 

The simplest case---consisting of a single choice occasion, two states
$a$ and $b$, two types, and two alternatives---already illustrates
the multifaceted nature of our results. In this setting, generic (i.e.,
holding for all parameter values except possibly on a measure-zero
set) identifiability of the type distribution is equivalent to either
of the following conditions: (i) the nullspaces of $M^{a}$ and $M^{b}$
intersect only trivially; (ii) in at least one of the two states,
there exists a matching between types and alternatives. The behavioral
meaning of both conditions is that at least one state must ``separate''
the two types, in the sense that the types choose differently in that
state. If, in addition, each type's behavior is constant across states,
then identification becomes global---i.e., it holds for all parameter
values rather than merely generically.

Overall, our results provide an exhaustive characterization of the
scope for identification. It is important to emphasize that the notion
of behavioral heterogeneity across types at the root of this characterization
is purely \emph{qualitative}: our identifying conditions do not depend
quantitatively on the type-level choice probabilities, but only on
the pattern of structural zeros in choices. Equivalently, the entire
analysis depends solely on the sign pattern of the matrix $M$.

The scope and applicability of our results is demonstrated by two
extended examples in Section \ref{sec:Applications}. The first examines
choice between vectors of characteristics \emph{à la} Lancaster \cite{Lancaster1966},
while the other studies a model of incomplete preferences.

\subsection{Related literature}

Dardanoni \emph{et al.} \cite{DardanoniManziniMariottiTyson2020}
has a similar motivation to ours, in that it studies generic type
identification from discrete choice data generated by a single menu
without additional covariate information, linking identifiability
to the invertibility of an associated matrix. However their analysis
is concerned with a specific parametric model of choice with consideration
sets, in which types are indexed by consideration capacities---that
is, by the maximum number of alternatives a decision maker can consider.
In that setting, the matrix representation of the model has a known
and highly structured form. A more econometrics-facing work in the
same vein, which however makes use of covariates besides choice observations
from a single menu with limited consideration (or limited feasibility)
is Barseghyan \emph{et. al.} \cite{barcoumoltei21}. By contrast,
the framework developed here is deliberately abstract and qualitative.
Our objective is not to analyze a particular behavioral model, but
to accommodate a broad class of models as special cases and to characterize
identification at this level of generality and exploiting only the
qualitative nature of the process of choice. 

A recent strand of the literature studies the identifiability of the
preference distribution from choice data within Random Utility Maximization
(RUM) frameworks. Contributions in this vein include Apesteguia \emph{et
al.} \cite{apeballu17}, Chambers and Turansick \cite{Chambers25},
Doignon and Saito \cite{doisai23}, Filiz-Ozbay and Masatlioglu \cite{filmas23},
Lu \cite{Lu2019}, Manzini and Mariotti \cite{manmar18},\footnote{See also the Correction in Manzini \emph{et al.} \cite{manmarpet19}.}
Turansick \cite{tur22}, and Yildiz \cite{yil22} (see also the important
earlier piece by Fishburn \cite{fish98}). Our model shares with this
literature a broad objective within a stochastic choice framework,
namely the identification of underlying heterogeneity from observed
choices. Our analysis departs from this literature in two main respects.
First, we do not take preferences, rankings, or general choice functions
as primitives; instead, we work with abstract \textquotedblleft behavioral
types.\textquotedblright{} Second, standard RUM approaches typically
take as observable a random choice rule specifying a distribution
of choices from each of several overlapping menus---distinct subsets
of a common grand set. By contrast, the core of our analysis relies
on the observed choice distribution from a single set of alternatives,
although we also provide an extension to a parsimonious multi-occasion
framework. These differences necessitate distinct analytical tools
and arguments.

Caradonna and Turansick \cite{cartur24}, like the papers above, also
focus on RUM identification, but this is mainly for expositional purposes.
Their work develops new general techniques for identification (related
to the unique decomposition of flows on directed acyclic graphs into
measures over paths) that extend to more general random choice models,
as well as dynamic discrete choice models. Once again, a main difference
from our approach lies in their use of a random choice rule as a primitive.

Progress on identification with more limited data than a full random
choice rule can be credited to Azrieli and Rehbeck \cite{azrreh22}.
Their framework is one of \emph{marginal stochastic choice}, a setting
in which only the probability of choosing each alternative and the
probability of each menu\textquoteright s realization are observed.

Our multi-occasion extension of the main framework shares methodological
affinities with Allman \emph{et al.} \cite{AllmanMatiasRhodes2009}
and Dardanoni \emph{et al.} \cite{darmanmarpettys24}, which in turn
build on classical tensor decomposition results by Kruskal \cite{Kruskal1977},
with a recent proof by Rhodes \cite{Rhodes2010}.

At a much broader level our work also relates to latent structure
models, running from finite mixture models (McLachlan and Peel \cite{McLachlan})
to nonparametric mixture models (Lindsay \cite{Lindsay1995}), and
hidden Markov models (Cappé \emph{et al.} \cite{Cappe2005}, Ephraim
and Merhav \cite{Ephraim2002}). The identification problem of mapping
observed choice data back to latent behavioral types finds its formal
roots in the seminal contributions of Koopmans and Reisersøl \cite{Koopmans1950a}
and Koopmans \cite{Koopmans1950b}. Traditionally, the study of identifiability
assumes a hypothetical, exact knowledge of the distribution of observables.
The central question then is whether the underlying structural parameters
can be uniquely recovered from this distribution. While such identification
is distinct from statistical inference, it remains a prerequisite
for it, as non-identifiable parameters render consistent estimation
impossible.

While we focus entirely a on a qualitative notion of behavioral heterogeneity,
a complementary approach is taken by Apesteguia and Ballester \cite{Apesteguia2025},
who study the \emph{measurement} of behavioral heterogeneity in environments
closely related to those underlying our identification analysis.

\section{Framework and definitions}

\label{sec:Framework-and-definitions} 

Let $T=\left\{ t_{1},\dots,t_{r}\right\} $ and $X=\left\{ x_{1},\dots,x_{n}\right\} $
be finite sets of types and alternatives, respectively. The analyst
observes choice shares $p$ for all alternatives in $X$. Let $M$
be an $n\times r$ matrix with a typical column containing the choice
shares for a specific type, so that the data can be described by Equation
\eqref{eq:CoreModel}. 

As noted, similarly to the RUM framework, this model admits both a
population and an intrapersonal interpretation. Under the population
interpretation, type heterogeneity captures variation in preferences
or cognitive traits across individuals. Under the intrapersonal interpretation,
the types instead represent distinct choice-relevant states---preferential
or cognitive---in which a single decision maker may find themselves
at different points in time.
\begin{defn}
\label{def:possible type}A type $t_{l}$ is \emph{possible} \emph{at
alternative $x_{k}$} if $M\left(k,l\right)>0$. In this case we write
$t_{l}\land x_{k}$, where $\land$ is a binary relation over the
sets $X$ and $T$.
\end{defn}
The possibility relation $\wedge$ captures the analyst\textquoteright s
prior knowledge about the structural zeros of the type--conditional
choice matrix $M$, while leaving unrestricted the magnitudes of the
positive entries. Thus, we identify the parameter space of the model
with a full dimensional subset of $\left[0,1\right]^{\left|\land\right|-1}$,
where $r-1$ dimensions describe $\pi$ and $\left|\land\right|-r$
dimensions describe nonzero entries of $M$. The model's parameterization
map is defined on the parameter space and takes values in $\left[0,1\right]^{n}$.
The model is \emph{globally identifiable} if its parameterization
map is injective. Often we will refer to identifiability as a property
of (a subset of) the parameters in the model.

In this paper we focus mostly on \emph{generic identifiability}. This
implies that the set of parameter values for which identifiability
does not hold has measure zero. In this sense, generic identifiability
is often sufficient for data analysis purposes. We next briefly lay
out the notions from algebraic geometry used repeatedly throughout
the paper. For more thorough introductions to the field, see Allman
\emph{et al.} \cite{AllmanMatiasRhodes2009} and Cox \emph{et al.}
\cite{Cox97}.

An algebraic variety is defined as the simultaneous zero-set of a
finite collection of multivariate polynomials $\{\phi_{i}\}$ on $\mathbb{R}^{m}$.
A variety is all of $\mathbb{R}^{m}$ only when all $\phi_{i}$ are
0; otherwise, a variety is called a proper subvariety and must be
of dimension less than $m$, and, hence, of Lebesgue measure 0 in
$\mathbb{R}^{m}$. The same is true if we replace $\mathbb{R}^{m}$
by a full-dimensional subset of $[0,1]^{m}$. Intersections of algebraic
varieties are algebraic varieties as they are the simultaneous zero-set
of the unions of the original sets of polynomials. Finite unions of
varieties are also varieties, since if sets $G_{1}$ and $G_{2}$
define varieties, then $\{fg|f\in G_{1},g\in G_{2}\}$ defines their
union. Given a set $\Theta\subseteq\mathbb{R}^{m}$ of full dimension,
we say that a property holds \emph{generically} on $\Theta$ if it
holds for all points in $\Theta$, except possibly for those on some
proper subvariety. Thus, generic identifiability of parameters means
that all non-identifiable parameter choices lie within a proper subvariety,
which is a set of Lebesgue measure zero.

Finally, the model with parametrization map $z$ is \emph{structurally
non-identifiable} if no point in $\Theta$ has a singleton fiber,
that is, for all $\theta\in\Theta$, $\left|{\cal F}\left(\theta\right)\right|>1$,
where ${\cal F}\left(\theta\right)=\left\{ \theta'\in\Theta|z\left(\theta'\right)=z\left(\theta\right)\right\} $
is the fiber of $z$ at $\theta$.

\section{Characterizing identifiability of the type distribution}

In this sparse observational context, our primary aim is the identification
of $\pi$. We will show that generic identification is equivalent
to the types' choices being sufficiently heterogeneous in a qualitative
sense. Specifically, we establish a connection between the identifiability
of $\pi$ and the possibility relation $\land.$ 

The following notion describes the possibility for types to choose
distinct alternatives. 
\begin{defn}
\label{def:matching}A (type-to-alternative) \emph{matching} is an
injective function $m:T\to X$ such that $t\land m(t)$ for all $t\in T$.

In our setting, identification of $\pi$ in the model defined by Equation
\eqref{eq:CoreModel} reduces to whether the structural zero pattern
encoded by $\wedge$ admits parameter configurations for which the
columns of $M$ are linearly independent. The first result shows that
this condition admits a simple combinatorial interpretation in terms
of matchings between types and alternatives:
\end{defn}
\begin{thm}
\label{prop:generic-ident}The parameters in $\pi$ are generically
identifiable if and only if there exists a matching $m:T\to X$. In
addition, if a matching exists and it is unique, then the parameters
in $\pi$ are globally identifiable. If no matching exists, then the
parameters in $\pi$ are structurally non-identifiable.
\end{thm}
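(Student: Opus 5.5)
The strategy is to reduce identifiability of $\pi$ to the linear independence of the columns of $M$, and then read that independence off the zero pattern $\wedge$ through the determinant expansion.

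\textbf{Reduction.} Suppose $M$ has full column rank $r$ at a parameter point. Then $p=M\pi$ determines $\pi$ uniquely among all vectors compatible with that $M$. For the genuine identification question, where $M$ may also vary, I will show that $\pi$ is recovered from $p$ once the possibility pattern is used. The starting observation is that the fibre in $\pi$ is affine: if $M$ has a nonzero kernel vector $v$, then $\pi+\varepsilon v$ produces the same $p$. Since the columns of $M$ are probability vectors, $\mathbf 1^\top M=\mathbf 1^\top$, so $\mathbf 1^\top v=0$. Hence $\pi+\varepsilon v$ is still a probability vector for small $\varepsilon$ whenever $\pi$ is interior. So rank deficiency of $M$ yields non-identifiability of $\pi$, and full column rank yields identifiability of $\pi$ given $M$.

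\textbf{Existence of a matching implies generic identifiability.} Fix a matching $m$ and consider the $r\times r$ submatrix of $M$ with rows $m(t_1),\dots,m(t_r)$. Its determinant is a polynomial in the free entries of $M$, with the column-sum constraint eliminated by expressing one entry per column in terms of the others. The monomial coming from the diagonal $\prod_l M(m(t_l),l)$ arises from a single permutation, so the polynomial is not identically zero. One caveat is that after substitution for the eliminated entries there could be cancellation. I would handle this by exhibiting one explicit admissible point with nonzero determinant: put mass close to $1$ on $m(t)$ in column $t$ and spread small positive mass over the other allowed cells. The submatrix is then close to a permutation matrix and hence nonsingular. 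Therefore the zero set of the determinant is a proper subvariety, and off it $M$ has full column rank. Off this subvariety $\pi$ is identified.

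\textbf{No matching implies structural non-identifiability.} By Hall's theorem, if no matching exists there is a set $S\subseteq T$ whose neighbourhood $N(S)=\{x:\exists t\in S,\ t\wedge x\}$ satisfies $|N(S)|<|S|$. The columns indexed by $S$ are supported on $N(S)$, so they live in a space of dimension less than $|S|$ and are linearly dependent at every parameter point. Thus $M$ always has a kernel vector $v$, and the perturbation above gives a second point in the fibre with a distinct $\pi$. At boundary points of $\Theta$ I may instead need to perturb $M$ jointly with $\pi$. Because the parameter space is taken as a full-dimensional subset of the interior, though, the $\pi$ perturbation suffices there. Hence every fibre has more than one element.

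\textbf{Unique matching implies global identifiability.} This is the step I expect to be the main obstacle. If the matching is unique, the bipartite graph of $\wedge$ admits exactly one perfect matching into its image rows. A standard fact then says the rows and columns can be ordered so that the $r\times r$ submatrix on rows $m(T)$ is triangular: a unique perfect matching forces a type of degree one in that submatrix, and induction gives the rest. The determinant of this submatrix equals the product of the positive diagonal entries and so is nonzero at every parameter point. Consequently $M$ has full column rank everywhere.

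I then still need to argue injectivity in $\pi$ when $M$ is also unknown. My plan is to recover $\pi$ recursively through the triangular structure, using the row sums of $p$ over suitable blocks of alternatives. This is the delicate point, and it may require the interpretation that global identifiability of $\pi$ is meant relative to a given $M$.
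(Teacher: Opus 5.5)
Your substantive arguments are the paper's. You use an $r\times r$ minor on the rows $m(T)$ of a matching that is not identically zero; the paper evaluates it at the exact $0$--$1$ permutation point, you at a nearby interior point, which is the same idea. You use the vanishing of all maximal minors when no matching exists, and a minor that is nonzero at every parameter point when the matching is unique.

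The one piece that would fail is the plan in your last paragraph, already announced in your reduction paragraph, to show that $p$ determines $\pi$ when $M$ is also unknown. That cannot be done, even generically. Take $X=\{x_1,x_2\}$ and $T=\{t_1,t_2\}$ with $t_1\wedge x_1$, $t_2\wedge x_1$, $t_2\wedge x_2$, which has a unique matching. Then $M=\begin{bmatrix}1&q\\0&1-q\end{bmatrix}$ and $p_2=(1-q)\pi_2$. So every interior $p$ is produced by a whole curve of pairs $(\pi_2,q)$ with distinct $\pi_2$, and no recursion on row sums of $p$ can recover $\pi$. Under that reading the theorem would be false.

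It must be read as the paper's proof reads it from its first line. Identifiability of $\pi$ means that $p=M\pi$ pins down $\pi$ for the given $M$, i.e.\ linear independence of the columns of $M$, generically or at every parameter point. Under that reading your triangular argument already finishes the global claim, and the ``main obstacle'' disappears.

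There are two smaller differences in route. In the unique-matching case, the paper simply notes that any permutation other than $\sigma_m$ must hit a structural zero in the submatrix on rows $m(T)$, since otherwise it would define a second matching. So exactly one Leibniz term survives. Your triangularization via a degree-one type is correct (it follows from the alternating-cycle argument) but is a detour. In the no-matching case, your Hall-deficient set together with $\mathbf 1^\top v=0$ makes explicit the second point of the fiber that the paper leaves implicit in ``$\operatorname{rank}(M)<r$''. As you observe, this perturbation need not stay in the simplex when $\pi$ is on the boundary, a point the paper does not address either.
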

\begin{proof}
Please refer to Appendix \ref{app:1}.
\end{proof}
Note that the existence of a matching implies $n\geq r$, i.e., there
are at least as many alternatives as types. The interpretation of
the matching is that $r$ agents of distinct types can always choose
$r$ distinct alternatives. In this case, the possibility relation
$\land$ is sufficiently rich to permit full identification of the
type distribution from observed choices. Theorem \ref{prop:generic-ident}
shows that the existence of a matching restricts the admissible behavioral
theories exactly to those in which identification arises almost always.
The theorem also clarifies that the failure of identification on measure
zero sets is attributable to the multiplicity of matchings: when the
matching is unique, identification occurs for all values of the parameters.
As we will show below, the converse is not true.

The necessity of a matching can be understood as follows. If no matching
exists between types and alternatives, then there is a subset of types
whose ``collectively choosable'' alternatives---that is, all those
at which one or more types in the subset are possible---are too few
to accommodate them all. As a result, any attempt to assign distinct
alternatives to these types must necessarily assign at least two types
to the same alternative, or assign some type to an alternative it
can never choose. Translating this into matrix terms, the columns
of $M$ corresponding to these types must be supported on a set of
rows that is not rich enough to ensure generic linear independence.
This issue is \emph{structural}: it does not depend on the particular
values of the positive entries of $M$, but follows solely from the
zero pattern implied by the possibility relation. No selection of
the positive parameter values in $M$ can, in this case, permit learning
of $\pi$ from the data.
\begin{rem}
Formally, the notion of a matching used here is equivalent to that
of a perfect matching in a bipartite graph whose left and right vertices
correspond to columns and rows of the matrix $M$, respectively, with
edges induced by its nonzero entries. Hence, the existence of a matching
is equivalent to the condition in Hall\textquoteright s marriage theorem
(Hall \cite{Hall1935}). Translated to our context, the condition
requires that for any set $S\subseteq T$ of types there exist at
least $\left|S\right|$ distinct alternatives such that some type
in $S$ is possible at at least one of them. This further clarifies
the sense in which the existence of a matching captures behavioral
heterogeneity.
\end{rem}
The following examples illustrate the result.
\begin{example}
\label{ex:M-matrix-3x3}Suppose that there are three types, labeled
$t_{1}$, $t_{2}$, and $t_{3}$, and three alternatives, labeled
$x$, $y$, and $z$. A behavioral theory determines the possibility
relation $\land=\left\{ \left(t_{1},x\right),\left(t_{2},x\right),\left(t_{2},z\right),\left(t_{3},x\right),\left(t_{3},y\right),\left(t_{3},z\right)\right\} $.
In turn this induces the zero entries of $M$: \[
M = \bordermatrix{~ & t_1 & t_2 & t_3 \cr
               x & \color{red}{\bullet} & \bullet & \bullet \cr
               y & 0 & 0 & \color{red}{\bullet} \cr
               z & 0 & \color{red}{\bullet} & \bullet \cr},
\]where black dots ($\bullet$) and red dots (${\color{red}{\bullet}}$)
denote parameters. This encapsulates the information on which alternatives
could possibly have been chosen by which types. The parametric description
of this matrix takes values in $\left[0,1\right]^{3}$. It is clear
that the columns of $M$ can never be linearly dependent, whatever
the values assigned to the dots. To build a matching, observe that
$t_{1}$ must be matched to $x$, because this is the only alternative
for which this type is possible. Then, $t_{2}$ must be matched to
$z$ and $t_{3}$ to $y$. The constructed matching (in red) is unique.
On the other hand, if type $t_{1}$ were possible at $y$, so that
the entry $M\left(2,1\right)$ is positive, there would exist other
matchings, such as $t_{1}$ matched to $y$, $t_{2}$ to $x$ and
$t_{3}$ to $z$. In this case the columns could be linearly dependent
for some values of the parameters, but the set of such values is of
measure zero. 
\end{example}
\begin{example}
Suppose that each type coincides with a consideration capacity, as
in Dardanoni et al. \cite{DardanoniManziniMariottiTyson2020}, namely
the maximum number of alternatives the agent can consider. Let $T=\left\{ t_{1},t_{2},...,t_{n}\right\} $
with type $t_{k}$ having consideration capacity $k$.\footnote{Here, type $t_{n}$ stands for a composite type including all types
that can consider \emph{at least} $n$ alternatives.} Each type maximizes on the set of considered alternatives a known
and common preference, namely a linear order $\succ$ defined on $X$.
Type $t_{k}$ considers each subset of $X$ of cardinality $k$ with
a probability governed by an unknown full-support probability distribution
on $\left\{ K\subseteq X|\,\left|K\right|=k\right\} $. This model
generalizes the fixed-preference model in \cite{DardanoniManziniMariottiTyson2020},
who assume the distribution on each $\left\{ K\subseteq X|\left|K\right|=k\right\} $
to be uniform.

To build a matching, number the alternatives $x_{1},x_{2},...,x_{n}$
in decreasing order of preference. Observe that type $t_{n}$, who
considers the entire set $X$, chooses $x_{1}$ with probability one
and must therefore be matched to $x_{1}$. Type $t_{n-1}$, who considers
all but one alternative, chooses with positive probability only $x_{1}$
and $x_{2}$, so that, in view of the previous matching, must be matched
with $x_{2}$. Continuing in this way, we build a (unique) matching,
with type $t_{k}$ matched to $x_{n-k+1}$ for each $k$. It follows
from Theorem \ref{prop:generic-ident} that the type distribution
is identified. Proposition 2 in \cite{DardanoniManziniMariottiTyson2020}
is a special case of this result. \footnote{What is more, Proposition 6 in \cite{DardanoniManziniMariottiTyson2020},
showing identifiability for generic preference distributions in the
extension of their model to heterogeneous preferences, is also implied
by Theorem \ref{prop:generic-ident}---note that the uniform distribution
assumption on equal-sized consideration sets makes type-level choice
probabilities structurally fixed for any given preference, so that
the only variable parameters in this model are those that relate to
the preference distribution.}
\end{example}

\subsection{Global identifiability}

To provide a full characterization of global identifiability, we introduce
some additional standard terminology. For any given permutation $\sigma$
on the index set $\left\{ 1,2,...,r\right\} $, an inversion is a
pair $\left(i,j\right)$ such that $i<j$ and $\sigma\left(i\right)>\sigma\left(j\right)$.
A permutation is \emph{even (resp., odd) }if it consists of an even
(resp. odd) number of inversions. Any matching $m:T\to X$ is associated
with a permutation $\sigma_{m}$ defined by $m\left(t_{i}\right)=x_{\sigma_{m}\left(i\right)}$.
The \emph{parity} of a matching $m$ is defined as \emph{even }(resp.,
\emph{odd}) if $\sigma_{m}$ is an even (resp., odd) permutation.\footnote{While the parity of a matching depends on the chosen orderings of
types and alternatives, only relative parity between matchings matters
for our results, so the choice of ordering is immaterial.} Two matchings have the same parity if and only if one can be transformed
into the other by an even number of pairwise swaps of assignments
between types. More precisely, a swap is a permutation that exchanges
two elements and leaves all the others fixed, that is for distinct
indices $i\neq j$ the swap between $i$ and $j$ is the permutation
$\sigma$ such that $\sigma\left(i\right)=j$, $\sigma\left(j\right)=i$
and $\sigma\left(k\right)=k$ for all $k\neq i,j$. 

The following result shows that the transition from generic to global
identifiability can be obtained by restricting the way multiple matchings
can coexist.

{\bfseries{}%
}
\begin{thm}
\label{thm:full-id-SNS}The parameters in $\pi$ are globally identifiable
if and only if there exists a subset $R\subseteq X$ with $|R|=r$
such that:

(i) There exists a matching $m:T\to R$;

(ii) Any other matching $m':T\to R$ can be obtained from $m$ via
an even number of swaps.
\end{thm}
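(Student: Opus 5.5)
We follow the reduction stated before Theorem \ref{prop:generic-ident}: $\pi$ is globally identifiable exactly when the columns of $M$ are linearly independent for \emph{every} admissible parameter value. Rescaling a column by a positive constant does not change the rank of $M$, and every type is possible at some alternative. So the column-sum normalisation can be dropped, and the claim becomes a statement about the nonnegative sign pattern $Z$ of $M$, the $n\times r$ $0/+$ pattern given by $\land$. The claim is that every matrix with pattern $Z$ has full column rank if and only if some $r\times r$ row-submatrix $Z_R$ is \emph{sign-nonsingular}. Conditions (i)--(ii) are a restatement of this: by the Leibniz expansion, the nonzero terms of $\det M_R$ correspond bijectively to matchings $T\to R$, and each term equals $\operatorname{sgn}(\sigma_m)$ times a product of positive entries.

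\emph{Sufficiency.} Take $R$ as in the statement. Condition (i) makes the expansion of $\det M_R$ nonempty. Condition (ii) says all matchings into $R$ have the parity of $m$, since composing with a swap flips parity. Hence every term of the expansion has the same strict sign, so $\det M_R\neq 0$ for all positive values of the entries. The columns of $M$ are therefore independent everywhere, and $\pi=M_R^{-1}p_R$ is globally identified.

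\emph{Necessity.} Suppose the columns of $M$ are independent for every admissible $M$, i.e.\ $Z$ is an $L$-pattern. I would first record the standard characterisation of $L$-patterns: $Z$ is an $L$-pattern iff for every nonzero sign vector $s\in\{-,0,+\}^{r}$ some row $k$ satisfies two conditions. First, the support of row $k$ meets $\operatorname{supp}(s)$. Second, $s$ is constant on that intersection. (If instead every row met $\operatorname{supp}(s)$ in entries of mixed sign or not at all, one can choose positive magnitudes so that $Mv=0$ for some $v$ with sign pattern $s$.)

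From this, I would build $R$ by induction on $r$. The case $s=(+,\dots,+)$ gives a row; the choices of $s$ supported on single types and on pairs force enough rows. Then I would try to select a row $k$ and a type $t$ such that deleting $k$ and $t$ leaves an $L$-pattern, and in which $t$ is the only type possible at $k$. For a row that is ``unisigned'' for a suitable $s$, expanding $\det$ along that row then preserves sign-nonsingularity. Because every entry is $\geq0$, the mixed-sign rows are exactly those whose support meets both $s^{-1}(+)$ and $s^{-1}(-)$. This should make the combinatorics tractable in a way it is not for general sign patterns.

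This step is the main obstacle. For general $\pm$ patterns, $L$-matrices need not contain an SNS square submatrix, so the proof must genuinely use nonnegativity. Concretely, I would argue by contradiction. Suppose every $R$ either has no matching or has matchings of both parities. For each $R$ with matchings, choose magnitudes making $\det M_R=0$. Then use the nonnegative structure, via a Hall-type deficiency argument on the possibility relation, to choose \emph{one} common assignment of magnitudes killing all $r\times r$ minors simultaneously. This would contradict the $L$-pattern property. If this simultaneous construction proves too hard, the fallback is the inductive row-deletion argument above, driven by the sign-vector characterisation.
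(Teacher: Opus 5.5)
Your sufficiency half is correct and is the paper's Leibniz-sign argument. The necessity half cannot be completed, because the claim you isolate is false: a $0/+$ $L$-pattern need not contain an $r\times r$ sign-nonsingular submatrix. With that claim, the ``only if'' direction of the statement fails too.

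\emph{Why nonnegativity does not help.} A possibility set $\{t,\bar t\}$ of size two forces $x_{\bar t}$ to have the sign opposite to $x_t$ in any kernel vector, so negative entries can be simulated. Lift the classical signed pattern with rows $(+,+,+),(+,+,-),(+,-,+),(+,-,-)$ in this way. This gives $r=5$ and $n=6$, with the types possible at $x_1,\dots,x_6$ equal to $\{t_2,t_4\}$, $\{t_3,t_5\}$, $\{t_1,t_2,t_3\}$, $\{t_1,t_2,t_5\}$, $\{t_1,t_3,t_4\}$, $\{t_1,t_4,t_5\}$.

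\emph{Every admissible $M$ has full column rank.} Suppose $Mx=0$ for an admissible $M$. Rows $x_1,x_2$ give $\operatorname{sgn}x_{t_4}=-\operatorname{sgn}x_{t_2}$ and $\operatorname{sgn}x_{t_5}=-\operatorname{sgn}x_{t_3}$. If $x_{t_1}>0$, one of rows $x_3,x_4,x_5,x_6$ is a sum of nonnegative terms with a positive one. The row is chosen according as $(x_{t_2},x_{t_3})$ is $(\geq0,\geq0)$, $(\geq0,<0)$, $(<0,\geq0)$ or $(<0,<0)$. So $x_{t_1}\le 0$, and applying this to $-x$ gives $x_{t_1}=0$. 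Then rows $x_3,x_4$ make $x_{t_3}$ and $x_{t_5}$ both of sign opposite to $x_{t_2}$. Row $x_2$ forbids this unless $x_{t_2}=0$, and then $x=0$. Hence $\pi$ is globally identified.

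\emph{Yet no $R$ satisfies (i)--(ii).} Each of the six $5\times5$ submatrices is singular for some admissible filling; column normalization is a positive rescaling and changes nothing.
Without $x_1$, the all-ones filling annihilates $(1,-1,0,-1,0)$. Without $x_2$, it annihilates $(1,0,-1,0,-1)$. Without $x_6$, put entries $2$ at $(x_3,t_1)$, $(x_4,t_2)$, $(x_5,t_3)$ and $1$ elsewhere; this annihilates $(1,-1,-1,1,1)$. The relabelings $t_2\leftrightarrow t_4$ and $t_3\leftrightarrow t_5$ permute $x_3,\dots,x_6$ transitively, which covers the remaining three cases. Since (i)--(ii) would make $\det M_R\neq0$ for every filling, no $R$ qualifies.

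The paper's own necessity proof breaks at the same point. Step~1 is sound. In Step~2, however, the average of the $M^{(R)}$ need not have any vanishing $r\times r$ minor, since a minor is not affine in the matrix. The footnote's repair presupposes an admissible matrix of rank $<r$, which is exactly what had to be produced. Your ``one common assignment of magnitudes killing all minors'' is this same missing step, and the example shows it is impossible in general.

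Your fallback induction needs a row at which only one type is possible. That fails already for the sign-nonsingular pattern $\{t_1,t_2\},\{t_2,t_3\},\{t_1,t_3\}$.

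What can actually be proved is:
\begin{itemize}
\item Conditions (i)--(ii) are sufficient.
\item They are necessary when $n=r$, since then $L$-pattern and sign-nonsingularity coincide.
\item Global identifiability is characterized exactly by the sign-vector condition you recorded: for every nonzero $s\in\{-,0,+\}^T$, some alternative's set of possible types meets $\operatorname{supp}(s)$ with $s$ constant on the intersection. It is not characterized by the existence of a sign-nonsingular $M_R$.
\end{itemize}
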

\begin{proof}
Please refer to Appendix \ref{App:full-id-SNS}.
\end{proof}
The condition in Theorem 2 coincides with the classical parity characterization
of sign-nonsingular (SNS) matrices. Our result simply adapts this
characterization to a rectangular and column-stochastic setting, and
interprets it as a condition for global identifiability of the type
distribution.\footnote{SNS matrices are square matrices where every real matrix with the
same sign pattern (positive, negative, or zero) is nonsingular. They
were introduced to Economics by Samuelson \cite{sam47}, and later
studied by others (e.g. Bassett \emph{et al.} \cite{basmayqui68},
Gorman \cite{gor64}, Lancaster \cite{Lancaster62}), not in connection
with identification but rather with comparative statics and the correspondence
principle. For a modern mathematical treatment of SNS matrices see
Brualdi and Shader \cite{brusha95}, Chapter 6.} In this interpretation, a matching is a possible ``explanation''
of the data imputing the choice of an alternative to an exclusive
type. The difficulty with two matchings connected by an odd number
of swaps is that they allow a complete reassignment of alternatives
across types that leaves aggregate choice behavior unchanged. Such
a reassignment creates two distinct but observationally equivalent
explanations of the data, undermining global identifiability.

When there exists a unique matching, condition (ii) in the statement
is vacuously satisfied; therefore, Theorem \ref{thm:full-id-SNS}
implies as a special case the sufficient condition for identifiability
of Theorem \ref{prop:generic-ident}.

\section{The type-state framework}\label{sec:One-shot-choice Type-State}

We now endow the framework with additional structure, which is common
in many economic environments. This allows for a more nuanced identification
analysis than that developed in the previous section. We assume that
the realization of a stochastic state determines the nature of the
matrix $M$, while choice by each type is deterministic conditional
on the state. Denote by $f$ a probability measure defined on a known
finite set $A$ of states. Each type $t$ first observes a realization
of $a$ according to $f$. Then, $t$ chooses according to a ``choice
function'' $c_{t}:A\to X$, where $c_{t}\left(a\right)$ denotes
the choice made by type $t$ in state $a$. In this sense, $f$ is
independent from the type.

For a given state $a\in A$ denote by $M^{a}$ the corresponding $n\times r$
matrix of choice shares for each type. A column $l$ of this matrix
is a one-hot vector such that $M^{a}\left(k,l\right)=1\iff c_{t_{l}}\left(a\right)=x_{k}$.
The matrix $M$ is a convex combination of such matrices, 
\begin{equation}
M=\sum_{a\in A}f\left(a\right)M^{a}.\label{eq:matrix-conv-comb}
\end{equation}
In this way, a parametric description of $M$ is given by $\left\{ f\left(a\right)\right\} _{a\in A}$
taking values in $[0,1]^{\left|A\right|-1}$. Here is a simple example
of this structure:\footnote{More elaborate examples are studied in Section \ref{sec:Applications}.}
\begin{example}
Suppose that each state $a$ corresponds to a linear preference order
$\succ_{a}$ over $X$, and that $f\left(a\right)$ gives the probability
assigned to preference $\succ_{a}$. Types differ in their \emph{attention
depth}. Specifically, alternatives are ordered by salience and type
$t_{k}$ searches only the $k$ most salient alternatives. The columns
of $M^{a}$ describe the type-level choices of agents with a given
attention depth. Index alternatives in decreasing order of salience.
Then the choice function of type $c_{t_{k}}$ is determined according
to $c_{t_{k}}\left(a\right)=\arg\max_{\succ_{a}}\left\{ x_{1},...,x_{k}\right\} $,
for $k=1,...n$. That is, for each state $a$, type $t_{k}$ chooses
the most-preferred alternative among the first $k$ alternatives according
to $\succ_{a}$. The resulting state-aggregated choice probabilities
in $M$ are then given by Equation \eqref{eq:matrix-conv-comb}. 
\end{example}
We establish two classes of characterization results that guarantee
generic identification. The first class provides matching conditions
between types and alternatives, in the spirit of those developed in
the previous section. The second class elucidates the \emph{sources}
of identification failure by focusing on algebraic properties of the
matrix $M$---in particular, conditions involving the nullspaces
of its constituent matrices---that determine its generic invertibility
and their behavioral interpretation.

\subsection{Identifiability of the type distribution via matching}

The following definition formalizes the idea of a state in which the
type is revealed through choice.
\begin{defn}
\label{def:ident-given-a}A state $a^{*}\in A$ \emph{separates}\textit{\emph{
types}}\textsl{ }if, for any two types $t_{1}$ and $t_{2}$, we have
$c_{t_{1}}\left(a^{*}\right)=c_{t_{2}}\left(^{*}\right)$ only if
$t_{1}=t_{2}$.
\end{defn}
Thus, ``$a^{*}$ separates types'' means that in state $a^{*}$
there is a matching of the types with the chosen alternatives. When
a state does not separate types we say that it \emph{pools} them.
The following result shows that, within the type-state framework,
the existence of just one type separating state is sufficient for
generic identification. Although this result follows from our later
characterization in Theorem \ref{prop:gen-indep-cols-iff}, it is
of independent interest due to the simplicity of its condition.
\begin{thm}
\label{Th:fully-id-given-a}Suppose there exists a state $a^{*}\in A$
that separates types. Then the parameters in $\pi$ are generically
identifiable.\label{prop:ident-given-a} 
\end{thm}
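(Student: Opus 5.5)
The plan is to reduce generic identifiability of $\pi$ to generic full column rank of $M=\sum_{a\in A}f(a)M^{a}$, regarded as a matrix whose entries are polynomials (indeed linear forms) in the state probabilities $\{f(a)\}_{a\in A}$. If $M$ has linearly independent columns at a given $f$, then $p=M\pi$ determines $\pi$ uniquely. So it suffices to show that the set of $f$ at which $M$ has rank less than $r$ is a proper subvariety of the parameter simplex. Since $M(k,l)=\sum_{a:c_{t_l}(a)=x_k}f(a)$, every $r\times r$ minor of $M$ is a polynomial in $f$. The rank-deficiency locus is the common zero set of all these minors, hence an algebraic variety. It is proper as soon as a single minor is not identically zero, or equivalently as soon as one admissible point $f$ makes $M$ full column rank.

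Next I choose the minor using the separating state $a^{*}$. Because $a^{*}$ separates types, the map $t\mapsto c_{t}(a^{*})$ is injective, so it is a matching in the sense of Definition \ref{def:matching}. Let $R=\{c_{t}(a^{*}):t\in T\}$, so $|R|=r$. Consider the square submatrix $M_{R}$ formed by the rows indexed by $R$. Then
\[
\det M_{R}(f)=\det\Bigl(\sum_{a\in A}f(a)M_{R}^{a}\Bigr),
\]
where $M_{R}^{a^{*}}$ is a permutation matrix.

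To show this polynomial is not identically zero, I evaluate it at the vertex $f(a^{*})=1$, $f(a)=0$ for $a\neq a^{*}$. There it equals $\det M_{R}^{a^{*}}=\pm1\neq0$. A polynomial that is nonzero at one point is not the zero polynomial.

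The one step needing care is that this vertex may lie on the boundary of the parameter space $[0,1]^{|A|-1}$, or outside it if $f$ is required to have full support. Nonvanishing at a point of the closure still suffices, because a polynomial vanishing on a full-dimensional subset vanishes identically, and so it would also vanish at the vertex. Alternatively, continuity gives nonvanishing on a neighbourhood of the vertex that meets the interior. In either case the zero set of $\det M_{R}$ is a proper subvariety, hence of measure zero. Off this set $M$ has full column rank and $\pi=(M_{R})^{-1}p_{R}$ is recovered, which proves generic identifiability.

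If one instead treats $\pi$ as the only unknown for fixed generic $f$, the same argument applies directly. Theorem \ref{prop:generic-ident} cannot be invoked verbatim here, because the entries of $M$ are not free parameters but are tied together through $f$. That tying is exactly why the evaluation-at-a-vertex argument is used in place of that result.
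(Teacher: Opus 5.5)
Your proof is correct and follows the paper's argument essentially step for step. Like the paper, you take the rows $R=\{c_t(a^{*}):t\in T\}$, treat $\det M_R$ as a polynomial in $f$, and evaluate it at the vertex $f(a^{*})=1$, where $M_R$ is a permutation matrix, so the zero set is a proper subvariety; your remark that the vertex may sit on the boundary of the parameter space is a welcome bit of care the paper leaves implicit.
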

\begin{proof}
Please refer to Appendix \ref{app:2}.
\end{proof}
To illustrate the previous result, consider the following example.
\begin{example}
There are three types, three alternatives, and two states, $A=\left\{ a,b\right\} $,
and
\[
{\textstyle M^{a}=\left[\begin{array}{ccc}
1 & 0 & 0\\
0 & 0 & 1\\
0 & 1 & 0
\end{array}\right],\quad M^{b}=\left[\begin{array}{ccc}
1 & 0 & 1\\
0 & 0 & 0\\
0 & 1 & 0
\end{array}\right],\quad M=\left[\begin{array}{ccc}
1 & 0 & f(b)\\
0 & 0 & f(a)\\
0 & 1 & 0
\end{array}\right],}
\]
where the third matrix is defined by $M=f(a)M^{a}+f(b)M^{b}$. Clearly,
$a$ separates types but $b$ does not. Because $f(b)=1-f(a)$, the
complete parameter description of $M$ is given by $f(a)\in\left[0,1\right]$.
By Theorem \ref{prop:ident-given-a}, the parameters in $\pi$ are
generically identifiable. Indeed, we can see that $\pi$ is not identifiable
only for $f(a)=0$.
\end{example}
We can make further progress by specifying the definition of a matching
for the current case. 
\begin{defn}
A \emph{matching} is an injective function $m:T\rightarrow X$ such
that for all $t\in T$, $m(t)=c_{t}\left(a\right)$ for some $a\in A$. 
\end{defn}
Note that when $A$ consists of a single state $a$, the existence
of a matching is equivalent to the fact that $a$ separates types.
The type-state framework also allows us to record the states in which
each type is matched to its corresponding alternative, motivating
the following definition.
\begin{defn}
A \emph{state-matching} is a pair $(m,\gamma)$ where $m$ is a matching
and $\gamma:T\to A$ is such that $m(t)=c_{t}\left(\gamma\left(t\right)\right)$
for all $t\in T$. 
\end{defn}
In words, a state-matching is defined by (i) an exclusive assignment
of an alternative to each type; and (ii) a specification of the state
in which any given type selects the assigned alternative. Note that
for a single matching $m$ there may be various state-matchings $(m,\gamma)$,
$(m,\gamma')$, etc.

For any state-matching $(m,\gamma)$, let $\Gamma(m,\gamma)=\left[\nu_{1}\ldots\nu_{|A|}\right]$
be the state-usage vector, indicating the numbers of times each state
$a_{i}$ is used to form the matching $m$, i.e., 
\[
\nu_{i}=\sum_{t\in T}\mathbf{1}\left[\gamma(t)=a_{i}\right].
\]
The following result spells out a condition which is both necessary
and sufficient for the generic identification of $\pi$ in the type-state
framework. For any subset of alternatives $R\subseteq X$ with $|R|=r$,
let $M_{R}$ denote the $r\times r$ submatrix of $M$ obtained by
retaining the rows associated with $R$ (and all $r$ columns). Denoting
$\textrm{\text{Im}}\left(m\right)=\left\{ m\left(t_{}\right)|\,\ensuremath{t_{}\in T}\right\} $
the image of $m$, let 
\[
\mathcal{S}_{R}:=\{(m,\gamma):(m,\gamma)\ \text{is a state-matching and }\text{Im}(m)=R\}.
\]

\begin{thm}
\label{prop:gen-indep-cols-iff} The parameters in $\pi$ are generically
identifiable if and only if there exists a subset $R\subseteq X$
with $|R|=r$ such that the following two conditions hold: 

(i) $\mathcal{S}_{R}$ is nonempty. 

(ii) For any partition of $\mathcal{S}_{R}$ into disjoint pairs $\big((m,\gamma),(m',\gamma')\big)$
with $\Gamma(m,\gamma)=\Gamma(m',\gamma')$, there exists at least
one pair for which $m'$ can be obtained from $m$ via an even number
of swaps. 

Moreover, if no such $R$ exists, then $\pi$ is structurally non-identifiable.
\end{thm}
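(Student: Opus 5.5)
Identification of $\pi$ reduces to generic full column rank of $M=\sum_{a}f(a)M^{a}$ as a function of $f$, and full column rank holds iff some $r\times r$ minor $\det M_{R}$ is not identically zero as a polynomial in $\{f(a)\}_{a\in A}$. So the plan is to show that, for each $R$ with $|R|=r$, $\det M_{R}\not\equiv0$ if and only if conditions (i) and (ii) hold for that $R$. For the ``if'' direction: when $\det M_{R}$ is a nonzero polynomial, its zero set is a proper subvariety of the parameter space. Off that set $M$ has full column rank, so $p=M\pi$ determines $\pi$ uniquely, giving generic identifiability. For the ``only if'' and structural parts: if every $\det M_{R}$ vanishes identically, then $M$ has rank $<r$ for every $f$. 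So for each parameter value there is a nonzero $v$ with $Mv=0$. Since the columns of $M$ are stochastic, $\mathbf{1}^{\top}Mv=\mathbf{1}^{\top}v=0$. Then $\pi+\varepsilon v$ is another valid distribution producing the same $p$; for interior $\pi$ this works directly, and the boundary is handled as in Theorem \ref{prop:generic-ident}, to which I defer for this step. This gives structural non-identifiability.

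The core step is to expand the determinant. By multilinearity in columns, with $M(k,l)=\sum_{a}f(a)M^{a}(k,l)$ and $M^{a}(k,l)=\mathbf{1}[c_{t_{l}}(a)=x_{k}]$, the Leibniz formula gives
\[
\det M_{R}=\sum_{\sigma}\mathrm{sgn}(\sigma)\prod_{l=1}^{r}\sum_{a\in A}f(a)\,\mathbf{1}[c_{t_{l}}(a)=x_{\sigma(l)}]
=\sum_{(m,\gamma)\in\mathcal{S}_{R}}\mathrm{sgn}(\sigma_{m})\prod_{i}f(a_{i})^{\nu_{i}},
\]
where $\nu=\Gamma(m,\gamma)$. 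Each bijection $T\to R$ with a choice of states realizing every assignment is exactly a state-matching with image $R$, and the sign is the relative parity of $m$ with respect to a fixed reference ordering. So $\det M_{R}$ is a polynomial whose monomials are indexed by state-usage vectors. The coefficient of the monomial $f^{\nu}$ is the number of even state-matchings with $\Gamma=\nu$ minus the number of odd ones.

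It remains to translate ``$\det M_{R}\not\equiv0$'' into (i)–(ii), and I expect this to be the main obstacle. The monomials $f^{\nu}$ are distinct as polynomials in the $|A|$ variables. One subtlety is the constraint $\sum_{a}f(a)=1$. Since every monomial has total degree $r$, the determinant is homogeneous of degree $r$, so it vanishes on the simplex iff it vanishes identically; this lets us treat the $f(a)$ as free. Then $\det M_{R}\equiv0$ iff, for every $\nu$, the state-matchings with $\Gamma=\nu$ split equally between the two parities. That is equivalent to $\mathcal{S}_{R}$ admitting a partition into pairs with equal $\Gamma$ and opposite parity, where ``opposite parity'' means $m'$ is obtained from $m$ by an odd number of swaps. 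Negating this: either $\mathcal{S}_{R}=\emptyset$, so (i) fails and the determinant is zero, or every equal-$\Gamma$ pairing contains a pair of equal parity, which is (ii). To finish, I would verify carefully that the existence of such an opposite-parity perfect pairing within each $\Gamma$-class is exactly equivalent to equal counts. This is immediate, since one pairs evens with odds inside each class, and pairs are formed only within classes by the requirement $\Gamma(m,\gamma)=\Gamma(m',\gamma')$.
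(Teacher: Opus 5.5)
Your proposal is correct and follows essentially the same route as the paper. You expand $\det M_{R}$ via Leibniz into monomials indexed by state-matchings in $\mathcal{S}_{R}$, group them by state-usage vector $\Gamma$, observe that the polynomial vanishes identically exactly when each $\Gamma$-class splits into opposite-parity pairs, and obtain structural non-identifiability from all $r\times r$ minors vanishing. Your homogeneity remark, which justifies treating the $f(a)$ as free variables despite the constraint $\sum_{a}f(a)=1$, fills in a step the paper leaves implicit.
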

\begin{proof}
Please refer to Appendix \ref{app:3}.
\end{proof}
The first condition in Theorem \ref{prop:gen-indep-cols-iff} simply
asserts the existence of a state-matching $\left(m,\gamma\right)$
for which the image of $m$ is the subset of alternatives $R$. The
second condition is a restriction on how state-matchings in $\mathcal{S}_{R}$
of different parities can coexist within a given state-usage profile.
In particular, it forbids the possibility that state-matchings of
opposite parity be paired in such a way that every pair is identical
in terms of its state usage. Note that this condition weakens the
one in Theorem \ref{thm:full-id-SNS}, which required \emph{all} matchings
to be obtainable from each other via an even number of swaps. In the
proof, the first condition ensures that at least one nonzero multiplicative
term appears in the Leibniz expansion of the determinant, while the
second guarantees that these terms are not fully cancelled at the
summation stage. Theorem \ref{nullspaces_general} below will offer
a behavioral interpretation and an equivalent algebraic condition.

The next two examples show situations where, respectively, the conditions
of Theorem \ref{prop:gen-indep-cols-iff} are or are not satisfied.
\begin{example}
Consider
\[
\alpha\left[\begin{array}{ccc}
1 & 0 & 0\\
0 & 1 & 1\\
0 & 0 & 0
\end{array}\right]+\beta\left[\begin{array}{ccc}
0 & 1 & 0\\
0 & 0 & 0\\
1 & 0 & 1
\end{array}\right]=\left[\begin{array}{ccc}
\alpha & \beta & 0\\
0 & \alpha & \alpha\\
\beta & 0 & \beta
\end{array}\right],
\]
where $X=\left\{ x,y,z\right\} $, $A=\left\{ a,b\right\} $, $\alpha=f(a)$,
and $\beta=f(b)$. The columns of the right-hand side matrix are generically
independent because the determinant is not zero, for example, for
$\alpha=\beta=\frac{1}{2}$. There is a state-matching $(m,\gamma)=\left(xyz,aab\right)$,
with $\Gamma(m,\gamma)=\left[2\ 1\right]$ . There is another state-matching
$(m',\gamma')=\left(zxy,bba\right)$, with $\Gamma(m',\gamma')=\left[1\ 2\right]$.
Clearly, there is no other state-matchings. Since $\Gamma(m,\gamma)\neq\Gamma(m',\gamma')$,
the second condition is vacuously satisfied. Note that Theorem \ref{prop:ident-given-a}
would fail in this case because neither state separates types.
\end{example}
\begin{example}
The following is an instance where the second condition of Theorem
\ref{prop:gen-indep-cols-iff} does not hold: 
\[
\alpha\left[\begin{array}{cccc}
1 & 0 & 1 & 0\\
0 & 1 & 0 & 1\\
0 & 0 & 0 & 0\\
0 & 0 & 0 & 0
\end{array}\right]+\beta\left[\begin{array}{cccc}
0 & 0 & 0 & 0\\
0 & 0 & 0 & 0\\
1 & 1 & 0 & 0\\
0 & 0 & 1 & 1
\end{array}\right]=\left[\begin{array}{cccc}
\alpha & 0 & \alpha & 0\\
0 & \alpha & 0 & \alpha\\
\beta & \beta & 0 & 0\\
0 & 0 & \beta & \beta
\end{array}\right],
\]
where $X=\left\{ x,y,z,w\right\} $. The columns of the right-hand
side matrix are linearly dependent for all $\alpha$ and $\beta$,
because the sum of the first and fourth columns equal to the sum of
the second and third column. To see how this case is ruled out in
Theorem \ref{prop:gen-indep-cols-iff}, note that there is one state-matching
$(m,\gamma)=\left(xzwy,abba\right)$, with $\Gamma(m,\gamma)=\left[2\ 2\right]$,
and another state-matching $(m',\gamma')=\left(zyxw,baab\right)$
with $\Gamma(m',\gamma')=\left[2\ 2\right]$. Observe that $m'$ is
obtained from $m$ by three swaps. It is easy to check tha there does
not exist any other state-matching. Hence, the second condition of
Theorem \ref{prop:gen-indep-cols-iff} is not satisfied.
\end{example}
Clearly, Theorem \ref{prop:ident-given-a} follows from Theorem \ref{prop:gen-indep-cols-iff},
since type separation by some $a^{*}\in A$ implies that there is
a state-matching such that $\gamma(t)=a^{*}$ for all $t\in T$, with
no other state-matchings having the same $\Gamma$.

It is also instructive to examine the following simplest $2\times2\times2$
case:
\begin{example}
\label{ex:2x2x2}Consider the type--state framework with two types
$T=\{t_{1},t_{2}\}$, two states $A=\{a,b\}$, and two alternatives
$X=\{x_{1},x_{2}\}$. Suppose, towards a violation of the identifiability
condition, that both states pool types---for instance, $c_{t_{1}}\left(a\right)=x=c_{t_{2}}\left(a\right)$
and $c_{t_{1}}\left(b\right)=y=c_{t_{2}}\left(b\right)$. Then there
exist two symmetric state-matchings: $\left(m,\gamma\right)=\left(xy,ab\right)$
and $\left(m',\gamma'\right)=\left(yx,ba\right)$. In both state-matchings,
each state is used exactly once and $m'$ is obtained from $m$ by
a single swap, violating the second condition of Theorem \ref{prop:gen-indep-cols-iff}.
This shows that at least one state must separate types for identifiability
to hold. What is more, this condition is also sufficient, since if
at least one state separates types, then Theorem \ref{Th:fully-id-given-a}
applies. 

These matching statements rest on underlying determinant conditions:
if state $s$ separates types, then $M^{s}$ has full rank (it is
the identity or a permutation matrix) and $\det M^{s}\neq0$, and
if state $s$ pools types, then $M^{s}$ has rank one (two identical
columns) and $\det M^{s}\neq0$. Global identification on the interior
of the simplex means that\footnote{Denoting $M\left(f\left(a\right)\right)=f\left(a\right)M^{a}+\left(1-f\left(a\right)\right)M^{b}$.}
$\det M(f(a))\neq0$ for all $f(a)\in(0,1)$. Since $\det M\left(f\left(a\right)\right)$
is a linear function of $f\left(a\right)$ in the $2\times2\times2$
case, this is equivalent to: $\det M^{a}$ and $\det M^{b}$ have
the same sign, and are not both zero. If exactly one state, say $a$,
separates types, then $\det M(f\left(a\right))=f(a)\det M^{a}$, which
is nonzero for all $f(a)\in(0,1)$. Hence the model is generically
(and indeed globally) identifiable. 

If both states separate types, two cases arise. When each type chooses
identically across states, $M^{a}=M^{b}=M$ and $M(f\left(a\right))$
is constant in $f\left(a\right)$ and invertible, so generic (and
indeed global) identification holds. By contrast, when each type chooses
differently across states, say $M^{a}$ is the identity and $M^{b}$
is a permutation matrix, $\det M(f\left(a\right))=2f\left(a\right)-1$
which vanishes at a unique interior value of $f(a)$, ensuring generic
(though not global) identification. The proof of Theorem \ref{cor:global_id_typestate}
generalizes this logic.
\end{example}
The reasoning in Example \ref{ex:2x2x2} shows as a corollary of Theorem
\ref{prop:gen-indep-cols-iff} that in the $2\times2\times2$ case
the parameters are generically identifiable if and only if there exists
at least one state that separates types. In subsection \ref{subsec:Identification-of-the}
we prove the result and its generalizations by means of nullspace
conditions.

\subsubsection{Global identifiability}

The characterizing condition for global identifiability closely mirrors
that of Theorem \ref{thm:full-id-SNS} in the general framework, with
one notable difference. In this context, \textquotedblleft global
identifiability\textquotedblright{} is understood with reference to
the parameter space in which state distributions are \emph{strictly
positive}, $f\in\left(0,1\right)^{\left|A\right|}$. Allowing $f$
to assign zero weight to some states would impose the additional requirement
that \emph{each} state-specific matrix $M^{a}$ be invertible. This
would trivialize the type--state framework, as it would imply that
types are already perfectly separated within each state. The matching
conditions in the following result therefore characterize global identifiability
on the interior of the simplex.
\begin{thm}
\label{cor:global_id_typestate}The parameters in $\pi$ are globally
identifiable on the interior of the simplex if and only if there exists
a subset $R\subseteq X$ with $|R|=r$ such that:

(i) $\mathcal{S}_{R}\neq\varnothing$; 

(ii) For any two state-matchings $(m,\gamma),(m',\gamma')\in\mathcal{S}_{R}$
, $m'$ can be obtained from $m$ via an even number of swaps. 
\end{thm}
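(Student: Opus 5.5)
The plan is to work with the determinants of the $r\times r$ submatrices $M_R$ of $M(f)=\sum_{a\in A}f(a)M^{a}$, as in the proof of Theorem \ref{prop:gen-indep-cols-iff}. The parameters in $\pi$ are identified at $f$ exactly when $M(f)$ has full column rank, that is, when $\det M_R(f)\neq0$ for some $R$. Global identifiability on the interior of the simplex therefore means that for every $f\in(0,1)^{|A|}$ some $r\times r$ minor is nonzero.

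The first step is to write this minor explicitly. Each entry is $M(k,l)=\sum_{a:\,c_{t_l}(a)=x_k}f(a)$. Expanding $\det M_R(f)$ by Leibniz, and then expanding each product of such sums, gives
\[
\det M_R(f)=\sum_{(m,\gamma)\in\mathcal{S}_R}\operatorname{sgn}(\sigma_m)\prod_{a\in A}f(a)^{\nu_a(m,\gamma)}.
\]
Here $\nu_a(m,\gamma)$ is the component of $\Gamma(m,\gamma)$ for state $a$. So each state-matching onto $R$ contributes one monomial, with coefficient $\pm1$ given by the parity of $m$.

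Sufficiency then follows directly. Suppose $R$ satisfies (i) and (ii). By (ii) all state-matchings in $\mathcal{S}_R$ have the parity of a fixed $m$, so every term in the expansion carries the same sign $\operatorname{sgn}(\sigma_m)$. By (i) there is at least one term. Each monomial is strictly positive when $f\in(0,1)^{|A|}$. Hence $|\det M_R(f)|>0$ on the whole interior, and $\pi$ is globally identified there.

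For necessity, assume that every $R$ fails (i) or (ii). The minors with $\mathcal{S}_R=\varnothing$ vanish identically. Each remaining minor is a homogeneous polynomial of degree $r$ in $f$ containing monomials of both signs, and the task is to find one interior $f$ at which all minors vanish at once.

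I would first treat a single $R$ by an intermediate value argument. Take a state-matching of each parity, $(m,\gamma)$ and $(m',\gamma')$. Move $f$ along interior curves that concentrate mass near the states in the image of $\gamma$, and then near those in the image of $\gamma'$. The aim is to show that the dominant monomial changes from positive to negative, so the minor changes sign and hence vanishes somewhere.

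I would then try to make all minors vanish simultaneously by producing a common kernel vector instead of working minor by minor. The idea is to use the odd-parity reassignment between $m$ and $m'$ to build an explicit signed combination of columns that cancels. This is the reallocation that the paper describes after Theorem \ref{thm:full-id-SNS}, of types across alternatives leaving aggregate choice unchanged, and the second example after Theorem \ref{prop:gen-indep-cols-iff}, where the sum of the first and fourth columns equals the sum of the second and third, is a concrete instance.

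I expect necessity to be the main obstacle, and I see two specific difficulties. First, after like monomials are collected, terms of opposite parity with the same $\Gamma$ cancel. What remains may have coefficients of one sign, or may be a mixed-sign form that is still positive on the interior, such as $x^2-xy+y^2$. So a sign change is not automatic, and I would first test the statement on small $2\times2\times2$ and $3\times3\times2$ configurations to check that the matching condition correctly captures these cases. Second, the argument must control all $R$ simultaneously. Because the rows of $M(f)$ sum to the all-ones vector, I would try to reduce the simultaneous-vanishing problem to a single well-chosen square block.
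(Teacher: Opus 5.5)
\textbf{Sufficiency is fine; necessity cannot be completed because it is false.} Your sufficiency argument is correct and coincides with the paper's. You expand $\det M_R(f)$ over $\mathcal{S}_R$; condition (ii) gives every monomial the same sign, (i) gives at least one, so the minor is nonzero on the open simplex. The necessity half is where the proposal stops being a proof. The first obstacle you flag is fatal rather than technical: the ``only if'' direction is false as soon as there are three states.

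Take $T=\{t_1,t_2\}$, $X=\{x,y\}$, $A=\{a,b,c\}$, with $c_{t_1}(a)=c_{t_2}(a)=x$, $c_{t_1}(b)=c_{t_2}(b)=y$, $c_{t_1}(c)=x$, $c_{t_2}(c)=y$. Then
\[
M(f)=\begin{pmatrix} f(a)+f(c) & f(a)\\ f(b) & f(b)+f(c)\end{pmatrix},\qquad \det M(f)=f(c)\bigl(f(a)+f(b)+f(c)\bigr)=f(c)>0
\]
on the interior, so $\pi$ is globally identified there. But $R=X$ is the only candidate, and $\mathcal{S}_X$ contains two state-matchings that differ by one swap, so (ii) fails:
\begin{itemize}
\item $(m,\gamma)$ with $m(t_1)=x$, $m(t_2)=y$, $\gamma(t_1)=a$, $\gamma(t_2)=b$;
\item $(m',\gamma')$ with $m'(t_1)=y$, $m'(t_2)=x$, $\gamma'(t_1)=b$, $\gamma'(t_2)=a$.
\end{itemize}
The two $f(a)f(b)$ monomials cancel within their $\Gamma$-class, and the survivors $f(a)f(c)+f(b)f(c)+f(c)^2$ share one sign. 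This is exactly the situation you anticipated. Since state $c$ separates types and the transposition is a possible reassignment, the subsequent corollary fails too.

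Your planned test cases would not expose this. With two states and $n=r$, view types as edges joining $c_t(a)$ and $c_t(b)$. The determinant factors over the cycles of this graph, and an even cycle contributes a factor that is either identically zero or vanishes at $f(a)=f(b)$. So in that setting the statement does hold, and a third state is needed to break it.

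\textbf{The paper's proof has the same gap.} Its necessity argument follows your route and breaks at the same spot. It puts mass $1-O(\varepsilon)$ on the states used by $\gamma$ and claims the $(m,\gamma)$ monomial then dominates. That only suppresses monomials using some state outside $\operatorname{Im}\gamma$. Any state-matching whose states lie inside $\operatorname{Im}\gamma$ is of the same order, and one with equal $\Gamma$ and opposite parity cancels it outright; here $\operatorname{Im}\gamma=\operatorname{Im}\gamma'=\{a,b\}$, so both $f^+$ and $f^-$ give $\det M=\varepsilon>0$. Your common-kernel-vector idea fails for the same reason: the odd reassignment exists, yet $M(f)$ is nonsingular for every interior $f$.

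Your second worry is also genuine and is not addressed in the paper. The paper only produces, separately for each $R$, some $f$ with $\det M_R(f)=0$; for $n>r$ that does not make $M(f)$ rank-deficient.

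\textbf{What survives.} Sufficiency stands. You can sharpen it by running your sign argument after collecting terms within $\Gamma$-classes, a weaker hypothesis that the example satisfies. Any correct necessity statement must at least account for such within-class cancellation, as condition (ii) of Theorem~\ref{prop:gen-indep-cols-iff} does in the generic case.
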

\begin{proof}
Please refer to Appendix \ref{App:Cor_global_id_typestate}.
\end{proof}
The $2\times2\times2$ setting neatly illustrates the kind of different
behavioral restrictions that are imposed by generic and global identifiability
conditions. The following example is to be contrasted with Example
\ref{ex:2x2x2}:
\begin{example}
\label{ex:2x2x2 Global}Consider again the type--state framework
with two types $T=\{t_{1},t_{2}\}$, two states $A=\{a,b\}$, and
two alternatives $X=\{x_{1},x_{2}\}$. Let $f(a)\in(0,1)$. Note that
here necessarily $R=A$ if an $R$ as in the conditions of Theorem
\ref{cor:global_id_typestate} exists. We know from Example \ref{ex:2x2x2}
that type separation in at least one state is necessary for global
identifiability (since it is such for generic identifiability).

Suppose \emph{exactly} one state separates types---for instance,
$c_{t_{1}}\left(a\right)=x$, $c_{t_{2}}\left(a\right)=y$ and $c_{t_{1}}\left(b\right)=y=c_{t_{2}}\left(b\right)$.
Then there is exactly one state-matching, $\left(m,\gamma\right)=\left(xy,aa\right)$,
that uses only one state; and exactly one state-matching, $\left(m,\gamma'\right)=\left(xy,ab\right)$
that uses each state once. The conditions of Theorem \ref{cor:global_id_typestate}
are satisfied and there is global identification.

Suppose alternatively that both states separate types. Here there
is a crucial distinction, between the case in which \emph{each type
chooses in the same way} \emph{in the two states} (e.g. $c_{t_{1}}\left(s\right)=x$,
$c_{t_{2}}\left(s\right)=y$ for $s\in A$), and the case in which
\emph{each type} \emph{swaps choice from one state to the other} (e.g.
$c_{t_{1}}\left(a\right)=x=c_{t_{2}}\left(b\right)$, $c_{t_{1}}\left(b\right)=y=c_{t_{2}}\left(a\right)$).
It is clear that while in the first case there is a unique matching
and thus global identification, in the latter case we can construct
multiple matchings differing by one swap, violating condition (ii)
of Theorem \ref{cor:global_id_typestate} and unraveling global identification.
The determinant analysis of Example \ref{ex:2x2x2} underlies this
matching analysis.
\end{example}
The upshot of Example \ref{ex:2x2x2} is that in the $2\times2\times2$
case the parameters are globally identifiable if and only if either
there is exactly one state that separates types, or both states separate
the types but each type chooses identically across states. If each
type chooses differently across states, then identification can only
be generic.

Finally, we have seen (Theorem \ref{cor:global_id_typestate}) that
the existence of a state that separates types is sufficient for generic
identification. This begs the question: what additional condition
ensures global identification? A corollary to Theorem \ref{cor:global_id_typestate}
yields the answer. Fix a ``reference'' state $a^{*}\in A$ that
separates types. Define a binary relation $\leadsto$ (relative to
$a^{*}$) on $T$ by 
\[
t\leadsto t'\quad\Longleftrightarrow\quad\exists a\in A\ \text{such that}\ c_{t}(a)=c_{t'}(a^{*}).
\]

This simply means that type $t$ is, in some state, possible at the
alternative chosen by type $t'$ in the reference state.

We say that a bijection $\varphi:T\to T$ is a \emph{possible reassignment}
(relative to $a^{*}$) if 
\[
\forall t\in T,\quad t\leadsto\varphi(t).
\]

In words: each type $t$ is assigned a ``target'' type $t'$. Each
type chooses, in some state(s), an alternative chosen by its ``target''
type in the reference state, with no two types choosing the same alternative
(because $\varphi$ is bijective).
\begin{cor}
\label{Th: global_typestate_sufficient}Suppose there exists a state
$a^{*}\in A$ that separates types. Then the parameters $\pi$ are
globally identifiable on the interior of the simplex if and only if
every possible reassignment $\varphi$ is an even permutation of $T$.
\end{cor}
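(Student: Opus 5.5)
The plan is to obtain the corollary from Theorem \ref{cor:global_id_typestate}, taking as the candidate set $R_{0}:=\{c_{t}(a^{*})\,|\,t\in T\}$. Because $a^{*}$ separates types, $|R_{0}|=r$. The map $\beta:T\to R_{0}$, $\beta(t')=c_{t'}(a^{*})$, is a bijection. The reference state-matching $(m^{*},\gamma^{*})$ with $m^{*}(t)=c_{t}(a^{*})$ and $\gamma^{*}\equiv a^{*}$ lies in $\mathcal{S}_{R_{0}}$, so condition (i) holds for $R_{0}$.

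The first key step is a dictionary between matchings into $R_{0}$ and possible reassignments. Given a matching $m:T\to R_{0}$, set $\varphi_{m}:=\beta^{-1}\circ m$. This is a bijection of $T$. For each $t$ we have $m(t)=c_{t}(a)$ for some $a$, hence $c_{t}(a)=c_{\varphi_{m}(t)}(a^{*})$, that is, $t\leadsto\varphi_{m}(t)$. Conversely, every possible reassignment $\varphi$ gives the matching $m=\beta\circ\varphi$, together with a state-matching obtained by letting $\gamma(t)$ be any witnessing state. Since $m=m^{*}\circ(\text{relabelling by }\varphi)$, i.e. $\sigma_{m}=\sigma_{m^{*}}\circ\sigma_{\varphi}$ after indexing types, the relative parity of $m$ and $m^{*}$ equals the parity of $\varphi$. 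Hence condition (ii) of Theorem \ref{cor:global_id_typestate} for $R_{0}$ is equivalent to every possible reassignment being even, because all matchings in $\mathcal{S}_{R_{0}}$ then share the parity of $m^{*}$. Sufficiency follows immediately: if all $\varphi$ are even, $R_{0}$ witnesses the conditions of Theorem \ref{cor:global_id_typestate}.

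Necessity is the main obstacle. Theorem \ref{cor:global_id_typestate} only guarantees \emph{some} $R$ satisfying (i)--(ii), and that $R$ need not be $R_{0}$. So suppose some odd possible reassignment $\varphi$ exists; I would show directly that $M(f)$ loses column rank at some interior $f$. The first attempt uses $\det M_{R_{0}}(f)$ in Leibniz form as a polynomial in $f$. The identity reassignment contributes the monomial $f(a^{*})^{r}$ with sign $\operatorname{sgn}(m^{*})$. The odd $\varphi$, with witnessing states $\gamma$, contributes a monomial $\prod_{t}f(\gamma(t))$ of opposite sign. Along a one-parameter family that puts almost all weight first on $a^{*}$ and then on the states in $\gamma$, the dominant term switches sign, so $\det M_{R_{0}}$ vanishes at an interior point.

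To upgrade this to a failure of full column rank of the whole of $M$, I would further restrict to the face where the states not used by $a^{*}$ or $\gamma$ receive vanishing weight. I would then argue that the rows outside $R_{0}$ become negligible, making use of the separation at $a^{*}$. Failing that, I would mimic the proof of necessity in Theorem \ref{cor:global_id_typestate}, which presumably already constructs a kernel vector from two opposite-parity state-matchings, and check that its construction applies with $R=R_{0}$. This transfer to all $r\times r$ minors simultaneously is the step I expect to need the most care.
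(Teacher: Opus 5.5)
Your ``if'' direction (all possible reassignments even $\Rightarrow$ global identifiability) is correct and is the paper's argument. The map $\varphi\mapsto\beta\circ\varphi$ identifies possible reassignments with the matchings underlying $\mathcal{S}_{R_0}$, and the parity of $\varphi$ equals the parity of $\beta\circ\varphi$ relative to $m^{*}$. Hence every monomial of $\det M_{R_0}$ carries the sign of $m^{*}$.

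The ``only if'' direction has a genuine gap, and it cannot be closed, because that direction is false. The failing step is your claim that, along a family putting almost all weight on the states in $\gamma$, the dominant term switches sign. Concentrating mass on $\{\gamma(t)\}$ does not make $\prod_t f(\gamma(t))$ dominant. Other state-matchings on the same states give monomials of the same order, and one with the same state usage and opposite parity can cancel it exactly; this is the phenomenon behind condition (ii) of Theorem \ref{prop:gen-indep-cols-iff}.

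Concretely, let $T=\{t_1,t_2\}$, $X=\{x,y\}$ and $A=\{a^*,b,c\}$, with $c_{t_1}(a^*)=x$, $c_{t_2}(a^*)=y$, $c_{t_1}(b)=c_{t_2}(b)=y$ and $c_{t_1}(c)=c_{t_2}(c)=x$. Then $a^*$ separates types, $t_1\leadsto t_2$ via $b$, and $t_2\leadsto t_1$ via $c$, so the swap is an odd possible reassignment. Yet
\[
M=\begin{pmatrix} f(a^*)+f(c) & f(c)\\ f(b) & f(a^*)+f(b)\end{pmatrix},\qquad \det M=f(a^*)\bigl(f(a^*)+f(b)+f(c)\bigr)=f(a^*)>0
\]
on the whole interior. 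The term $-f(b)f(c)$ from $(\text{swap},(b,c))$ is cancelled by the $+f(b)f(c)$ from the identity matching with $\gamma=(c,b)$. At your second endpoint, $\det M=\varepsilon>0$.

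Your fallback does not help either. The paper proves this direction by citing the necessity half of Theorem \ref{cor:global_id_typestate}. That proof is exactly this sign-switch argument on a single minor $\det M_R$; no kernel vector is built, and the same example refutes it. The paper also silently passes from singularity of $M_{R_0}$ to failure of global identifiability, ignoring the $R\neq R_0$ issue you correctly flagged.

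That issue gives an independent counterexample when $n>r$. Add an alternative $z$, let $b$ send $t_1\mapsto y$ and $t_2\mapsto x$, and let $c$ send $t_1\mapsto z$ and $t_2\mapsto x$. Then $\det M_{\{x,y\}}=f(a^*)^2-f(b)\bigl(f(b)+f(c)\bigr)$ changes sign on the interior. But for $f(c)>0$ the two columns of $M$ are distinct probability vectors, hence linearly independent.

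So only the ``if'' direction of the corollary is provable.
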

\begin{proof}
Please refer to Appendix \ref{App:Cor_global_id_typestate}.
\end{proof}

\subsection{Identification of the type distribution via nullspace conditions}\label{subsec:Identification-of-the}

The previous analysis provides powerful and general characterizations
of generic identification. However, those conditions are not completely
informative on how different states interact to produce or destroy
identification; and on the precise behavioral implications of such
failures. This section complements that analysis by adopting a geometric,
linear-algebraic perspective rather than a combinatorial one. We derive
nullspace conditions that make explicit the directions in aggregate
choice space along which identification fails, and we interpret lack
of identification as the existence of nontrivial directions in aggregate
choice space that can be generated in multiple ways.

This is transparently translated to behavioral conditions. The key
conceptual step is the notion of ``typical splits'', introduced
in Definition \ref{typical} below. Identification fails precisely
when distinct subsets of types generate the same aggregate choice
splits---that is, the same differences in choice frequencies between
two alternatives---so that these directions cannot be uniquely attributed
to individual types. 

For brevity and to obtain clear conditions, we focus on the case of
two states, and on generic identification only.

\subsubsection{The $2\times2$ and $3\times3$ cases}

We proceed in stages, picking up from the simple case of two states,
two types, and two alternatives from the end of the previous section.
In this case, Equation \eqref{eq:matrix-conv-comb} reduces to the
convex combination of two $2\times2$ matrices, where generic identification
can be characterized as follows.
\begin{thm}
\label{nullspace2by2} For $p=M\pi$, where $M=\sum_{a\in A}f(a)M^{a}$
is a $2\times2$ matrix as defined in Equation \eqref{eq:matrix-conv-comb}
with $A=\{a,b\}$ and $T=\{t_{1},t_{2}\}$, the following statements
are equivalent 
\end{thm}
\begin{itemize}
\item[(i)]  The parameters in $\pi$ are generically identifiable
\item[(ii)] $\operatorname{Null}(M^{a})\cap\operatorname{Null}(M^{b})=\{{\bf 0}\}$ 
\item[(iii)] $c_{t_{1}}(a)\neq c_{t_{2}}(a)$ or $c_{t_{1}}(b)\neq c_{t_{2}}(b)$ 
\end{itemize}
\begin{proof}
Please refer to Appendix \ref{app:4}.
\end{proof}
This first result via nullspace conditions establishes that the parameters
in $\pi$ are generically identifiable if and only if there is a state
that separates types (condition (iii) of Theorem \ref{nullspace2by2}).
On the other hand, the theorem also provides a purely algebraic characterization
of generic identification. The equivalence of Condition (i) and (ii)
shows that the parameters in $\pi$ are generically identifiable if
and only if the intersection of the nullspaces of $M^{a}$ and $M^{b}$
is the trivial one, i.e., the only ${\bf x}$ satisfying both $M^{a}{\bf x}={\bf 0}$
and $M^{b}{\bf x}={\bf 0}$ is the zero-vector. Given the specific
one-hot structure of these matrices, this nullspace condition is equivalent
to restricting one of the individual nullspaces to the trivial one,
i.e., $\operatorname{Null}(M^{a})=\{{\bf 0}\}$ or $\operatorname{Null}(M^{b})=\{{\bf 0}\}$.
This demonstrates the equivalence of Condition (ii) and (iii).

Next, we turn to the case of two states, three types, and three alternatives.
As in the previous case, it remains necessary for identification that
not all states pool types. But, this condition alone is \emph{not}
sufficient for generic identification in the $3\times3$ case. The
next example illustrates why.
\begin{example}
\label{example_3by3} This example consists of two parts. This first
part motivates why at least one state must separate types for identifiability
when there are three alternatives and three types. Let $A=\{a,b\}$
be the state space, $T=\{t_{1},t_{2},t_{3}\}$ the type space, and
$p=M\pi$, where $M=f(a)M^{a}+f(b)M^{b}$, for all $f(a),f(b)\in[0,1]$,
$f(a)+f(b)=1$, is a $3\times3$ matrix. Now suppose that both states
pool types, i.e., let $c_{t_{1}}(a)=c_{t_{2}}(a)$ and $c_{t_{1}}(b)=c_{t_{2}}(b)$.
For instance, let $M$ be given as follows 
\[
f(a)\underbrace{\left[\begin{array}{ccc}
1 & 1 & 0\\
0 & 0 & 1\\
0 & 0 & 0
\end{array}\right]}_{M^{a}}+f(b)\underbrace{\left[\begin{array}{ccc}
0 & 0 & 0\\
1 & 1 & 0\\
0 & 0 & 1
\end{array}\right]}_{M^{b}}=\underbrace{\left[\begin{array}{ccc}
f(a) & f(a) & 0\\
f(b) & f(b) & f(a)\\
0 & 0 & f(b)
\end{array}\right]}_{M}
\]
Then, the parameters in $\pi$ are not identifiable. To see this observe
that $M$ is clearly not invertible, because $\operatorname{Null}(M)=\operatorname{Null}(M^{a})\cap\operatorname{Null}(M^{b})\neq\{{\bf 0}\}$.

The next part shows that the existence of a type separating state
does not by itself guarantee generic identification. To see this,
let the matrix $M$ be given as follows 
\[
f(a)\underbrace{\left[\begin{array}{ccc}
1 & 1 & 0\\
0 & 0 & 1\\
0 & 0 & 0
\end{array}\right]}_{M^{a}}+f(b)\underbrace{\left[\begin{array}{ccc}
0 & 1 & 1\\
1 & 0 & 0\\
0 & 0 & 0
\end{array}\right]}_{M^{b}}=\underbrace{\left[\begin{array}{ccc}
f(a) & 1 & f(b)\\
f(b) & 0 & f(a)\\
0 & 0 & 0
\end{array}\right]}_{M}
\]
Note that no two columns of $M$ coincide for all $f(a)$ and $f(b)$.
In other words, for any two types, there exists a state that separates
them. Nevertheless, the parameters in $\pi$ are not identifiable.
Indeed, $M$ is not invertible, because it has a zero row indicating
that the corresponding alternative is never chosen. What captures
this algebraically is the fact that the nullspaces of the \emph{transposes
}of $M^{a}$ and $M^{b}$ have the following non-trivial intersection,
\[
\operatorname{Null}(M^{T})=\operatorname{Null}(M^{{a}^{T}})\cap\operatorname{Null}(M^{{b}^{T}})=\operatorname{span}\left\{ \begin{bmatrix}0\\
0\\
1
\end{bmatrix}\right\} .
\]
\end{example}
Hence, for generic identification in the $3\times3$ case some additional
condition is needed besides not all states pooling types. The additional
requirement is that \emph{aggregate choices spread over all three
alternatives}. The next result shows that, taken together, these two
conditions characterize generic identification in the $3\times3$
case.
\begin{thm}
\label{nullspace3by3} For $p=M\pi$, where $M=\sum_{a\in A}f(a)M^{a}$
is a $3\times3$ matrix as defined in Equation \eqref{eq:matrix-conv-comb}
with $A=\{a,b\}$, $X=\{x,y,z\}$, and $T=\{t_{1},t_{2},t_{3}\}$,
the following are equivalent 
\end{thm}
\begin{itemize}
\item[(i)]  The parameters in $\pi$ are generically identifiable
\item[(ii)] $\operatorname{Null}(M^{a})\cap\operatorname{Null}(M^{b})=\{{\bf 0}\}\quad\text{ and }\quad\operatorname{Null}(M^{{a}^{T}})\cap\operatorname{Null}(M^{{b}^{T}})=\{{\bf 0}\}$ 
\item[(iii)] For all $t,t'\in T$ with $t\neq t'$, there exists an $a\in A$
such that $c_{t}(a)\neq c_{t'}(a)$; and, for all $w\in X$, there
exist $a\in A$ and $t\in T$ such that $c_{t}(a)=w$ 
\end{itemize}
\begin{proof}
Please refer to Appendix \ref{app:5}.
\end{proof}
The equivalence between Condition (i) and (iii) establishes that,
in the $3\times3$ case, generic identification is equivalent to two
requirements. The first is the existence of a type separating state.
The second is that none of the three alternatives is never chosen.
The theorem also shows how this translates in terms of nullspaces.
The equivalence between Condition (i) and (ii) shows that the parameters
in $\pi$ are generically identifiable if and only if \emph{both}
the intersection of the nullspaces of $M^{a}$ and $M^{b}$ and the
intersection of the nullspaces of $M^{{a}^{T}}$ and $M^{{b}^{T}}$
are the trivial ones. Given the one-hot structure of these matrices,
$\operatorname{Null}(M^{a})\cap\operatorname{Null}(M^{b})=\{{\bf 0}\}$
is equivalent to the existence of a type separating state, and $\operatorname{Null}(M^{{a}^{T}})\cap\operatorname{Null}(M^{{b}^{T}})=\{{\bf 0}\}$
is equivalent to none of the three alternatives being never chosen.

With three alternatives and three types, the aggregate choice shares
may allow to identify all types, \emph{even though choice shares from
any single state alone may not do so}. Neither the data in $M^{a}$
alone, nor the data in $M^{b}$ alone would suffice for identifying
the parameters in $\pi$, but together they guarantee generic identification.
The next example illustrates such a situation.
\begin{example}
Let $A=\{a,b\}$ be the state space and let there be three types and
three alternatives, i.e., $|T|=|X|=3$. Consider $p=M\pi$, where
the $3\times3$ matrix $M=f(a)M^{a}+f(b)M^{b}$, for all $f(a),f(b)\in[0,1]$,
$f(a)+f(b)=1$, is given as follows 
\[
f(a)\underbrace{\left[\begin{array}{ccc}
1 & 1 & 0\\
0 & 0 & 1\\
0 & 0 & 0
\end{array}\right]}_{M^{a}}+f(b)\underbrace{\left[\begin{array}{ccc}
0 & 0 & 0\\
1 & 0 & 0\\
0 & 1 & 1
\end{array}\right]}_{M^{b}}=\underbrace{\left[\begin{array}{ccc}
f(a) & f(a) & 0\\
f(b) & 0 & f(a)\\
0 & f(b) & f(b)
\end{array}\right]}_{M}
\]
Neither the data in $M^{a}$ alone nor that in $M^{b}$ alone suffices
for identification, as each such matrix has a zero row. Taken together,
however, this data allows for generic identification. For any two
types, there exists a state that separates them. Furthermore, $M$
has no zero row which shows that aggregate choices spread over all
three alternatives. In other words, neither $M^{a}$ nor $M^{b}$
are invertible. But, since $\operatorname{Null}(M^{a})\cap\operatorname{Null}(M^{b})=\{{\bf 0}\}$
and $\quad\operatorname{Null}(M^{{a}^{T}})\cap\operatorname{Null}(M^{{b}^{T}})=\{{\bf 0}\}$,
Theorem \ref{nullspace3by3} implies that their convex combination
$M$ leads to $\pi$ being generically identifiable from $p=M\pi$.
Indeed, since $\det(M)=-f(a)f(b)(f(a)+f(b))$, the parameters in $\pi$
are generically identifiable as long as neither $f(a)$ nor $f(b)$
are zero. 
\end{example}

\subsubsection{The general $r\times r$ case}

The previous results suggest that a single type separating state may
suffice for generic identification. Indeed, in line with Theorem \ref{prop:ident-given-a},
this is true in the most general $r\times r$ case. That is, it holds
for any $r\in\mathbb{N}$ and any finite number of $r\times r$ matrices.
But the result can be further strengthened. For the parameters in
$\pi$ to be generically identifiable from $p=M\pi$ and $M=\sum_{a\in A}M^{a}$,
it suffices that there exists some subset $B\subseteq A$ such that
the parameters in $\pi$ would be generically identifiable from $p=N\pi$
and $N=\sum_{b\in B}f(b)M^{b}$. The next result spells out this fact.
\begin{thm}
\label{invertiblegeneric} If the parameters in $\pi$ are generically
identifiable from $p=N\pi$ where $N=\sum_{b\in B}f(b)M^{b}$ is as
defined in Equation \eqref{eq:matrix-conv-comb}, then they continue
to be generically identifiable from $p=M\pi$ , where $M=\alpha M^{a}+(1-\alpha)N$. 
\end{thm}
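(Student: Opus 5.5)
The plan is to reduce generic identifiability to the nonvanishing of a polynomial determinant and then use a continuity/limit argument at the boundary $\alpha=0$.

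\textbf{Step 1: reformulate.} In the type--state framework with $r$ types, generic identifiability of $\pi$ from $p=M\pi$ amounts to $M$ having full column rank for generic parameter values. I will work with square $r\times r$ matrices, as in the surrounding subsection; in the rectangular case the same argument applies to a suitable $r\times r$ minor. The parameters of $M$ are $\alpha$ together with the weights $\{f(b)\}_{b\in B}$ of $N$, and these range over a full-dimensional set. The determinant
\[
D(\alpha,f)=\det\bigl(\alpha M^{a}+(1-\alpha)N(f)\bigr)
\]
is a polynomial in $(\alpha,f)$. Generic identifiability holds exactly when $D$ is not identically zero. In that case the non-identified set is contained in the proper subvariety $\{D=0\}$. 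I will note that the parameters of $N$ enter polynomially, as required by the convention in Section 2.

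\textbf{Step 2: use the hypothesis on $N$.} By assumption, $\pi$ is generically identifiable from $p=N\pi$. Since the parameter space for $f$ is full-dimensional, there is a parameter value $f_{0}$ with $\det N(f_{0})\neq0$.

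\textbf{Step 3: continuity in $\alpha$.} Fix $f=f_{0}$ and regard $g(\alpha)=D(\alpha,f_{0})$ as a univariate polynomial of degree at most $r$. Then
\[
g(0)=\det N(f_{0})\neq0,
\]
so $g$ is not the zero polynomial and has at most $r$ roots. Consequently there exist interior values $\alpha\in(0,1)$, arbitrarily close to $0$, with $g(\alpha)\neq0$. Hence $D$ is a nonzero polynomial on the full-dimensional parameter space, and $\{D=0\}$ is a proper subvariety of Lebesgue measure zero. This gives generic identifiability for $M$.

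\textbf{Main obstacle.} The real difficulty is bookkeeping rather than mathematics. If the convention is that the combined weights $\alpha$ and $(1-\alpha)f(b)$ must lie in the interior of the simplex, then $\alpha=0$ is a boundary point. I will handle this with the argument in Step 3: a polynomial that is nonzero at a boundary point of a full-dimensional domain cannot vanish identically on that domain. A secondary point arises if $M^{a}$ coincides with one of the $M^{b}$, $b\in B$. In that case the parameterization merely merges weights, and the same argument goes through after reparameterizing. If needed, I will also state the rectangular version: pick the row set $R$ on which $N_{R}(f_{0})$ is invertible and apply the argument to $\det M_{R}$.
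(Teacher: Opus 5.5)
Your proof is correct and follows essentially the paper's argument: both observe that $\det(\alpha M^{a}+(1-\alpha)N)$ is a polynomial of degree at most $r$ in $\alpha$ that equals $\det N\neq 0$ at $\alpha=0$, so it is not the zero polynomial. The only difference is that the paper runs this for every generic $f$ and then concludes the exceptional set has measure zero, which relies on an implicit Fubini step. You instead use a single witness $f_{0}$ to show directly that the determinant is a nonzero polynomial in $(\alpha,f)$, which matches the paper's proper-subvariety definition of genericity a little more cleanly.
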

\begin{proof}
Please refer to Appendix \ref{app:6}.
\end{proof}
The implication of this result for identification is the following:
if data from some subset of states suffices for the parameters in
$\pi$ to be generically identifiable, then the inclusion of additional
states cannot undo this identification. A simple corollary is that,
within the type--state framework, the parameters in $\pi$ are generically
identifiable if and only if there exists at least a parameter configuration
that is unambiguously revealed by the data. The next result formalizes
this observation.
\begin{cor}
\label{invertiblethengenericinvertible} The parameters in $\pi$
are generically identifiable from $p=M\pi$, where $M=\sum_{a\in A}f(a)M^{a}$
is as defined in Equation \eqref{eq:matrix-conv-comb}, if and only
if there exists some probability measure $\overline{f}$ on $A$ such
that $p=\overline{M}\pi$ and $p\neq\overline{M}\pi'$ for all $\pi'\neq\pi$,
where $\overline{M}=\sum\overline{f}(a)M^{a}$. 
\end{cor}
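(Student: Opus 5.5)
The plan is to reduce identifiability of $\pi$ to the nonvanishing of a polynomial in $f$. Work with $r\times r$ matrices, as in this subsection; in the rectangular case, replace the determinant by the maximal minors. Identifiability of $\pi$ from $p=\overline{M}\pi$ at a given $\overline{f}$ means $\overline{M}$ has trivial nullspace, which for square $\overline{M}$ is $\det\overline{M}\neq 0$. The entries of $M(f)=\sum_{a}f(a)M^{a}$ are linear in $f$, so by the Leibniz formula $g(f):=\det M(f)$ is a homogeneous polynomial of degree $r$ in the coordinates $(f(a))_{a\in A}$. On the simplex we eliminate one coordinate via $f(a_{|A|})=1-\sum_{i<|A|}f(a_i)$. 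The resulting polynomial $\tilde g$ is defined on the full-dimensional parameter set $[0,1]^{|A|-1}$ (more precisely, on the simplex region of it).

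\emph{``If''.} Suppose some probability measure $\overline{f}$ makes $\overline{M}$ injective. Then $\tilde g(\overline{f})\neq0$, so $\tilde g$ is not the zero polynomial. Its zero set $\{\tilde g=0\}$ is therefore a proper subvariety, as recalled in Section \ref{sec:Framework-and-definitions}. Off this set $M(f)$ is invertible, so $\pi$ is determined by $\pi=M(f)^{-1}p$. This is exactly generic identifiability.

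\emph{``Only if''.} If $\pi$ is generically identifiable, the non-identifiable set is contained in a proper subvariety, which has measure zero. Its complement in the simplex is therefore nonempty, and any point $\overline{f}$ in that complement has $\overline{M}$ injective. This is the required measure.

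The one delicate point is \emph{boundary measures} $\overline{f}$, i.e.\ those with some zero weights, which the statement allows. This is where Theorem \ref{invertiblegeneric} enters. If $\overline{f}$ is supported on $B\subsetneq A$, then $\det\overline{M}\neq0$ shows that $\pi$ is generically identifiable from $N=\sum_{b\in B}\overline{f}(b)M^{b}$ by the same polynomial argument on the face. Applying Theorem \ref{invertiblegeneric} repeatedly, adding one state at a time, then lifts generic identifiability to the full model. Alternatively, one can note directly that a polynomial nonzero at a boundary point of the simplex is not identically zero on the simplex's affine hull, so its zero set has measure zero in the interior.

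I expect the main care to be needed in \emph{formalizing ``generic'' consistently on the simplex parameterization}, i.e.\ making sure nonvanishing of $\tilde g$ at a single (possibly boundary) point implies it is not identically zero as a polynomial in the free coordinates. This holds because $\tilde g$ is a polynomial on all of $\mathbb{R}^{|A|-1}$, and the point lies in that space. The rest is routine.
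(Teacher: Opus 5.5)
Your proof is correct, but it follows a different route from the paper's. The paper writes out only the ``if'' direction, and it does so by reduction to Theorem \ref{invertiblegeneric}. It writes every $M=\sum_a f(a)M^a$ as $\alpha\overline{M}+(1-\alpha)N$, with $N$ another mixture of the $M^a$. Along each such segment, $\det$ is a polynomial in $\alpha$ of degree at most $r$ that is nonzero at the $\overline{M}$ end, so it has at most $r$ roots. Genericity is then claimed by letting $N$ range over all mixtures.

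You go straight to the multivariate polynomial. After eliminating one simplex coordinate, $\det M(f)$ (or a maximal minor) is nonzero at $\overline{f}$, hence not the zero polynomial, hence its zero set is a proper subvariety. That is exactly the paper's definition of genericity.

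Your route buys three things:
\begin{itemize}
\item it is shorter;
\item it treats boundary $\overline{f}$ in the same way as interior ones, whereas the paper's decomposition $f=\alpha\overline{f}+(1-\alpha)\underline{f}$ with $\alpha\in(0,1)$ and $\underline{f}\ge 0$ needs $f$ to be positive wherever $\overline{f}$ is, and the paper leaves implicit the step from ``finitely many bad points on each segment'' to a proper subvariety;
\item it proves the easy converse explicitly, which the paper omits.
\end{itemize}
The paper's route buys a direct link to the message of Theorem \ref{invertiblegeneric}, that adding states cannot destroy identification. It also gives a per-segment bound of at most $r$ singular mixtures on any segment through $\overline{f}$.

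Finally, your fallback of applying Theorem \ref{invertiblegeneric} repeatedly for boundary $\overline{f}$ is unnecessary. As you note yourself at the end, the polynomial argument already covers that case.
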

\begin{proof}
Please refer to Appendix \ref{app:7}.
\end{proof}
This result implies in particular that if the parameters in $\pi$
are\emph{ not} generically identifiable, then there does not exist
\emph{any} parameter configuration that is uniquely revealed by the
data $p$. While lack of generic identification is defined as the
failure of identification on a set of parameter values of nonzero
Lebesgue measure, Corollary \ref{invertiblethengenericinvertible}
implies a stronger conclusion, already encountered in the matching
approach: \emph{when generic identification of $\pi$ fails, it fails
structurally.} 

Finally, we extend the conditions from Theorem \ref{nullspace2by2}
and Theorem \ref{nullspace3by3} to the general $r\times r$ case.
The ensuing theorem specifies the form of qualitative behavioral heterogeneity
that is both necessary and sufficient for generic identification.
To state it precisely, we first introduce a key definition, formalizing
when a given pattern of aggregate choice shares can be regarded as
typical of a group of types. 
\begin{defn}
\label{typical} Given $M$ as defined in Equation \eqref{eq:matrix-conv-comb}
and some non-empty set of types $S\subseteq T$, the split between
two alternatives $x,y\in X$ is \emph{typical} \emph{of} $S$ if 
\[
({\bf e}^{x}-{\bf e}^{y})\in\operatorname{span}\{M_{*j}:\text{column }j\text{ corresponds to some type }t\in S\}
\]
where for $w\in\{x,y\}$, ${\bf e}^{w}$ is defined as follows: for
all $i=1,\dots,r$ 
\[
{\bf e}_{i}^{w}=\begin{cases}
1, & \text{if Row \ensuremath{i} corresponds to alternative \ensuremath{w}}\\
0, & \text{otherwise}
\end{cases}
\]
\end{defn}
Intuitively, a split between two alternatives is typical of a set
of types if the aggregate difference in their choice frequencies between
these alternatives can be replicated as a linear combination of the
choice patterns generated by those types alone. In this case, observed
differences in frequency between the two alternatives do not provide
information that helps distinguish a particular type within the set,
as the same aggregate split could be generated by reallocating choice
probabilities among the remaining types. Typical splits therefore
capture directions in aggregate choice space that are not informative
about the behavior of individual types, and play a central role in
determining when different patterns of behavior can or cannot be uniquely
attributed to distinct types.

The next example illustrates the idea of a split between two alternatives
being typical of some set of types.
\begin{example}
Let $A=\{a,b\}$ be the state space and let there be four types, $T=\{1,2,3,4\}$
and four alternatives, $X=\{w,x,y,z\}$. Further, let the $4\times4$
matrix $M=f(a)M^{a}+f(b)M^{b}$, for all $f(a),f(b)\in[0,1]$, $f(a)+f(b)=1$,
be given as follows 
\[
f(a)\underbrace{\begin{bmatrix}1 & 1 & 0 & 0\\
0 & 0 & 1 & 0\\
0 & 0 & 0 & 1\\
0 & 0 & 0 & 0
\end{bmatrix}}_{M^{a}}+f(b)\underbrace{\begin{bmatrix}0 & 0 & 0 & 0\\
1 & 0 & 0 & 0\\
0 & 1 & 0 & 0\\
0 & 0 & 1 & 1
\end{bmatrix}}_{M^{b}}=\underbrace{\begin{bmatrix}f(a) & f(a) & 0 & 0\\
f(b) & 0 & f(a) & 0\\
0 & f(b) & 0 & f(a)\\
0 & 0 & f(b) & f(b)
\end{bmatrix}}_{M}
\]
where type $t\in T$ corresponds to the $t$-th Column of $M$, and
alternative $w$ to Row 1, $x$ to Row 2, $y$ to Row 3, and $z$
to Row 4. Note that $\det(M)=0$. Now, for $f(a),f(b)\neq0$, the
split between alternatives $x$, $y$ is typical of the set of types
$\{3,4\}$, as 
\[
\begin{bmatrix}\;\;\,0\\
\;\;\,1\\
-1\\
\;\;\,0
\end{bmatrix}=\frac{1}{f(a)}(M_{*3}-M_{*4})
\]
But this same split is also typical of the set of types $\{1,2\}$.
To see this, observe that $f(a)(M_{*1}-M_{*2})=f(b)(M_{*3}-M_{*4})$.
\end{example}
The definition of a split between two alternatives being typical of
a set of types provides the behavioral notion of heterogeneity that
is both necessary and sufficient for identification. In the previous
example, the fact that $\det(M)=0$ indicates that the same split
cannot be typical for distinct sets of types. The next theorem formalizes
this observation and establishes an equivalent algebraic condition,
expressed in terms of non-trivial intersections between the nullspaces
of the relevant matrices.
\begin{thm}
\label{nullspaces_general} For an $r\times r$ matrix $M=\sum_{a\in A}f(a)M^{a}$
as defined in Equation \eqref{eq:matrix-conv-comb} with $A=\{a,b\}$,
$|X|=|T|=r$, the following are equivalent 
\end{thm}
\begin{itemize}
\item[(i)]  The parameters in $\pi$ are generically identifiable
\item[(ii)]  For almost all $\{f(a)\}_{a\in A}$, for all non-trivial ${\bf x}\in\operatorname{Null}(M^{a})$,
all non-trivial ${\bf y}\in\operatorname{Null}(M^{b})$, and all ${\bf z}\in\operatorname{span}\{{\bf x},{\bf y}\}^{\perp}$,
it holds that 
\[
(f(a)M^{a}+f(b)M^{b}){\bf z}\neq M^{a}{\bf y}+M^{b}{\bf x}
\]
\item[(iii)]  There exists $a,b\in A$, $a\neq b$, such that, for all $t,t'\in T$,
$t\neq t'$, with $c_{t}(a)=c_{t'}(a)$, it holds that $c_{t}(b)\neq c_{t'}(b)$
and the split between $c_{t}(b)$ and $c_{t'}(b)$ is not typical
of $(T\setminus\{t,t'\})$ 
\end{itemize}
\begin{proof}
Please refer to Appendix \ref{app:8}.
\end{proof}
Condition (iii) conveys two key implications. First, for any two types,
there must exist at least one state that separates them. Second, if
two types are pooled by one state, then the way in which their choices
differ in another state cannot itself be typical of the choices of
all other types. The condition connects directly to the identification-by-matching
results from the previous section. If two types are pooled every state,
they cannot be matched to distinct alternatives. Moreover, if two
types are pooled by one state and are separated by another, but their
difference in choice is also typical of another subset of types, then
the uniqueness requirement of the matching result in Theorem \ref{prop:gen-indep-cols-iff}
fails to hold.

To understand Condition (ii), first observe that Theorem \ref{invertiblegeneric}
implies that the parameters in $\pi$ are generically identifiable
whenever there is no non-trivial ${\bf x}\in\operatorname{Null}(M^{a})$
or no non-trivial ${\bf y}\in\operatorname{Null}(M^{b})$. It follows
that any non-trivial ${\bf w}\in\operatorname{Null}(M)$ can be written
as a linear combination involving some such non-trivial ${\bf x}$
and ${\bf y}$. That is, any such non-trivial ${\bf w}$ can be decomposed
as follows 
\begin{equation}
{\bf w}={\bf x}+{\bf y}+{\bf z}\label{wxyz}
\end{equation}
for some non-trivial ${\bf x}\in\operatorname{Null}(M^{a})$, some
non-trivial ${\bf y}\in\operatorname{Null}(M^{b})$ and some ${\bf z}\in\operatorname{span}\{{\bf x},{\bf y}\}^{\perp}$.
Now, the parameters in $\pi$ being generically identifiable implies
that, for almost all $\{f(a)\}_{a\in A}$, there exists no such non-trivial
${\bf w}\in\operatorname{Null}(f(a)M^{a}+f(b)M^{b})$. In other words,
for almost all $\{f(a)\}_{a\in A}$, all ${\bf w}$ from Equation
\eqref{wxyz} satisfy $(f(a)M^{a}+f(b)M^{b}){\bf w}\neq{\bf 0}$.
As the proof of Theorem \ref{nullspaces_general} establishes this
inequality is equivalent to the one in Condition (ii).

Condition (ii) furthermore generalizes the conditions of the previous
identification theorems in this section. First, observe that Condition
(ii) generalizes Theorem \ref{invertiblegeneric} in the following
way: If $\pi$ is globally identifiable from $p=M^{a}\pi$ (or $p=M^{b}\pi$),
the nullspace of $M^{a}$ (or $M^{b}$) is the trivial one, i.e.,
it only contains the zero vector. Since Condition (ii) only refers
to non-trivial elements of the nullspaces of $M^{a}$ and $M^{b}$,
in line with Theorem \ref{invertiblegeneric}, Condition (ii) is then
satisfied automatically. Next, observe that $\operatorname{span}\{{\bf x},{\bf y}\}^{\perp}$
always contains the zero vector. So, Condition (ii) rules out that
the nullspaces of $M^{a}$ and $M^{b}$ intersect in a non-trivial
way. To see this, suppose otherwise and let there be some ${\bf x}\neq{\bf 0}$
which is an element of both $\operatorname{Null}(M^{a})$ and $\operatorname{Null}(M^{b})$.
But, then ${\bf 0}=f(a)M^{a}{\bf y}+f(b)M^{b}{\bf x}$, contradicting
Condition (ii), as ${\bf 0}\in\operatorname{span}\{{\bf x},{\bf y}\}^{\perp}$.
As such, Condition (ii) generalizes the corresponding condition in
Theorem \ref{nullspace2by2} from the $2\times2$ case. Since $\operatorname{span}\{{\bf x},{\bf y}\}^{\perp}$
contains the zero vector, another way in which Condition (ii) may
fail is when $M^{a}{\bf y}=M^{b}{\bf x}$ holds, for some non-trivial
${\bf x}\in\operatorname{Null}(M^{a})$ and some non-trivial ${\bf y}\in\operatorname{Null}(M^{b})$.
Therefore, Condition (ii) also generalizes the corresponding condition
in Theorem \ref{nullspace3by3} from the $3\times3$ case.\footnote{We should also note that, although parts of the proof make use of
linear decompositions of vectors into components lying in different
subspaces, no inner-product structure or notion of orthogonality plays
any \emph{substantive} role in the argument. All results depend only
the one-hot structure of the matrices $\ensuremath{M^{a}}$ and $M^{b}$,
and on the resulting row-sum identities that characterize typical
splits. Any reference to complementary subspaces is purely a notational
convenience (and could in fact be replaced by an arbitrary linear
complement). In particular, the identification results are invariant
to the choice of inner product and do not rely on any metric properties
of $\mathbb{R}^{r}$ . The characterization remains in this sense
entirely qualitative in nature.}
\begin{rem}
All characterization results in this section can be generalized to
the case of an $n\times r$ matrix $M$, where $n\geq r$. To this
end, the corresponding statements need to be adjusted to saying that
the matrix M contains an $r\times r$ submatrix to which these statements
then apply.
\end{rem}

\section{Multiple choice occasions and the identification of type-level choice
probabilities}\label{sec:Inference-from-multiple}

This section advances the analysis by studying how identification
can be extended to type-level choice probabilities. To this end, we
consider an analyst who observes choice data from a finite number
of occasions $j\in\left\{ 1,\ldots,J\right\} $. The extreme case
of $J=1$ is that of a one-shot observation, which was the focus of
the previous sections. We assume that the distribution $\pi$ of types
remains constant across occasions. For each occasion $j$, agents
choose from a feasible set $X_{j}.$ We assume that choices are independent
across occasions conditional on the type. The notation and definitions
introduced earlier extend in the natural way to all $j$-indexed objects
of this section. 

The data set is now described as a tensor\footnote{A tensor is a multidimensional array that extends the concept of a
matrix to an arbitrary number of indices, known as the tensor\textquoteright s
order. Its dimensions specify how many values each index can take,
generalizing the numbers of rows and columns in a matrix.} $S$ of order $J$ with dimensions $n_{1}\times\ldots\times n_{J}$.
A typical entry $S_{k_{1},\ldots,k_{J}}$ is the share of the population
choosing alternative $x_{k_{j}}$ from $X_{j}$ for $j\in\left\{ 1,\ldots,J\right\} $.
Hence, $S$ contains the joint distribution of choices for the $J$
occasions. We can express $S$ using $M_{j}$ and $\pi$ as 
\begin{equation}
S=\sum_{h=1}^{r}\pi(h)\bigotimes_{j=1}^{J}M_{j}\boldsymbol{1}_{h},\label{eq:choice-tensor-1-1}
\end{equation}
where $\bigotimes$ is the outer product operator and $\boldsymbol{1}_{h}$
is the unit vector for component $h$.\footnote{Thus, $M_{j}\boldsymbol{1}_{h}$ is the $h$th column of $M_{j}$.
The outer product operation then creates a rank-1 tensor of order
$J$. A tensor is said to be of rank 1 if it is an outer product of
vectors. Therefore, $S$ is a convex combination of $r$ rank-1 tensors. }

It turns out that $J=3$ is sufficient for the generic identification
of the model in Equation \eqref{eq:choice-tensor-1-1}, provided the
possibility relations are sufficiently rich. Under this condition,
one can apply algebraic results on tensor decompositions to identify
the model. The uniqueness properties of these decompositions have
been extensively studied, beginning with Kruskal\textquoteright s
\cite{Kruskal1977} fundamental theorem and later refined by Allman,
Matias, and Rhodes \cite{AllmanMatiasRhodes2009} and others. Because
the spirit of our analysis is that of identifying the parameters from
minimal data, we restrict the agent\textquoteright s choices to the
three occasions required. Additional occasions neither aid nor impede
identification in this setting, whereas two occasions are insufficient.

\subsection{General framework}

In the general case, the parameter space of the model is a full dimensional
subset of $\left[0,1\right]^{m}$ where $m=(r-1)+\sum_{j=1}^{3}(\left|\land_{j}\right|-r)$.
The model's parameterization map takes values in $\left[0,1\right]^{n_{1}\times n_{2}\times n_{3}}$. 

The next result gives a sufficient condition for generic identification.
\begin{thm}
Suppose that there exist natural numbers $v_{j}\leq r$ for $j\in\left\{ 1,2,3\right\} $
such that:

(i) $v_{1}+v_{2}+v_{3}\geq2r+2$.

(ii) For any $j\in\left\{ 1,2,3\right\} $ and any $v_{j}$ types
in $T$, there is a matching of these types with alternatives in $X_{j}$.

Then the parameters in $\pi,$ $M_{1}$, $M_{2}$, and $M_{3}$ are
generically identifiable, up to label swapping.\label{prop:ident-3occasions}
\end{thm}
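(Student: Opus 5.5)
The plan is to reduce the statement to Kruskal's uniqueness theorem for three-way tensor decompositions, in the generic form used by Allman, Matias and Rhodes \cite{AllmanMatiasRhodes2009}. Write $S=\sum_{h=1}^{r}\pi(h)\,(M_{1}\mathbf{1}_{h})\otimes(M_{2}\mathbf{1}_{h})\otimes(M_{3}\mathbf{1}_{h})$. Kruskal's theorem states the following. Let $k_{j}$ be the Kruskal rank of $M_{j}$, i.e.\ the largest $k$ such that \emph{every} set of $k$ columns is linearly independent. If all entries of $\pi$ are positive and $k_{1}+k_{2}+k_{3}\geq 2r+2$, then the decomposition is unique up to a simultaneous permutation of the $r$ rank-one terms and rescaling of the factors. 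In our setting the columns of each $M_{j}$ are probability vectors, and $\pi$ is a probability vector. Hence the scaling ambiguity disappears: each column is pinned down by normalizing it to sum to one, and $\pi(h)$ is then recovered as the remaining scalar. So uniqueness up to permutation is exactly identifiability of $\pi,M_{1},M_{2},M_{3}$ up to label swapping.

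The key step is therefore to show that, under condition (ii), $k_{j}\geq v_{j}$ holds for generic parameter values, for each $j$. Fix $j$ and any set $U\subseteq T$ of $v_{j}$ types. By (ii) there is a matching of $U$ into $X_{j}$ consistent with $\land_{j}$. Restricted to the columns in $U$, this is exactly the situation of Theorem \ref{prop:generic-ident}, applied to the $n_{j}\times v_{j}$ submatrix of $M_{j}$. That theorem (more precisely, its proof) shows that the columns indexed by $U$ are linearly independent off a proper subvariety. Concretely, take the rows $R=m(U)$. The $v_{j}\times v_{j}$ minor on rows $R$ and columns $U$ is a polynomial in the free entries of $M_{j}$. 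It is not identically zero, because the monomial corresponding to the matching appears in the Leibniz expansion with nonzero coefficient, and distinct permutations give distinct monomials in the free entries.

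Some care is needed with the column-stochastic constraint, since one entry per column is determined by the others. I would handle this as in the proof of Theorem \ref{prop:generic-ident}: either treat the minor as a polynomial in the full set of positive entries and note that the constraint set is full-dimensional in the parameter space, or exhibit a single admissible parameter point where the minor is nonzero. One such point puts weight close to one on the matched entry of each column, which makes the submatrix close to a permutation matrix.

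Since there are finitely many subsets $U$ and finitely many $j$, the union of the exceptional varieties is still a proper subvariety of the parameter space. We also add the variety $\{\prod_{h}\pi(h)=0\}$. Off this union we have $k_{j}\geq v_{j}$ for each $j$, so by (i) $k_{1}+k_{2}+k_{3}\geq 2r+2$, and Kruskal's theorem applies.

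I expect the main obstacle to be the genericity bookkeeping rather than the algebra. Three points need checking. First, each minor is a non-zero polynomial on the actual constrained parameter space; the point is that the single remaining entry in each column depends affinely on the others, so the minor stays polynomial, and exhibiting one good point settles it. Second, the parameter space is full-dimensional, so a proper subvariety has measure zero. Third, Kruskal's permutation ambiguity must act simultaneously on $\pi$ and all three $M_{j}$, which is precisely what label swapping means.
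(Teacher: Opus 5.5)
Your proposal is correct and follows essentially the same route as the paper. You use the matching from (ii) to pick a $v_{j}\times v_{j}$ minor that is nonzero at one parameter point (the paper sets the matched entries equal to $1$, you take them close to $1$), conclude that the Kruskal rank of $M_{j}$ is at least $v_{j}$ off a finite union of proper subvarieties, add the set $\{\pi(h)=0\ \text{for some } h\}$, and then invoke Kruskal's theorem through Corollary 2 of \cite{AllmanMatiasRhodes2009}. Your explicit remark that column-stochasticity removes the scaling ambiguity makes visible a step the paper leaves inside that corollary.
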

\begin{proof}
Please refer to Appendix \ref{app:9}.
\end{proof}
Note that for the application of Theorem \ref{prop:ident-3occasions}
we need at least $2r+2$ alternatives across the three occasions that
have positive choice shares. Additionally, because identification
is unique up to a simultaneous permutation of the entries of $\pi$
and columns of $M_{1}$, $M_{2}$, and $M_{3}$, we do not know a
priori which of the inferred values of $\pi$ corresponds to a particular
type. However, the correct labeling is often revealed by the pattern
of 0's in $M_{j}$. The matrix in Example \ref{ex:M-matrix-3x3} may
serve to illustrate this.
\begin{example}
\label{example multiocc 1}Consider the following matrix form, with
rows corresponding to four alternatives $x$, $y$, $z$, and $w$,
and columns corresponding to types $t_{1}$, $t_{2}$, and $t_{3}$,\[
M_j = \bordermatrix{~ & t_1 & t_2 & t_3 \cr
               x & \color{red}{\bullet} & \bullet & \bullet \cr
               y & 0 & 0 & \color{red}{\bullet} \cr
               z & \bullet & \color{red}{\bullet} & \bullet \cr
               w & 0 & 0 & \bullet \cr}.
\]Note that $t\land_{j}x$ and $t\land_{j}z$ for all types $t$, but
$t\land_{j}y$ and $t\land_{j}w$ only for $t=t_{3}$. It is easy
to see that the second condition of Theorem \ref{prop:ident-3occasions}
is satisfied with $v_{j}=3$ (for example, take the matching represented
by the red (${\color{red}{\bullet}}$) dots).
\end{example}
Clearly, when a type-to-alternative matching as in Definition \ref{def:matching}
exists, it implies that $v_{j}=r$, making it easier to satisfy the
first condition of Theorem \ref{prop:ident-3occasions}. The next
examples illustrates a situation where less information is conveyed
by choices. 
\begin{example}
\label{example multiocc 2}Consider the following matrix form: \[
M_j = \bordermatrix{~ & t_1 & t_2 & t_3 \cr
               x & \bullet & \bullet & \bullet \cr
               y & 0 & 0 & \bullet \cr
               z & 0 & 0 & \bullet \cr}.
\] In this case, the second condition is only true for $v_{j}=1$ because
there is no matching covering both types $t_{1}$ and $t_{2}$. Intuitively,
this pattern of choices does not help to identify the shares of $t_{1}$
and $t_{2}$ because the agents of these types always choose $x$. 
\end{example}

\subsection{Type-state framework}

We now consider multiple occasions of choice in the type-state framework.
The states are allowed to vary with the occasions, with $A_{j}$ denoting
the set of states at occasion $j$. When the conditions of Theorem
\ref{prop:gen-indep-cols-iff} are satisfied for each occasion, we
can prove the generic identifiability of the parameters of our model.
\begin{thm}
Consider the model described by Equations \eqref{eq:matrix-conv-comb}
and \eqref{eq:choice-tensor-1-1} with parameters $\pi$ and $f_{j}$.
If, for any occasion $j\in\{1,2,3\}$, the conditions of Theorem \ref{prop:gen-indep-cols-iff}
hold, then $\pi$ and $\left\{ M_{j}:j\in\left\{ 1,2,3\right\} \right\} $
are generically identified, up to label swapping.\label{prop:ident-3occasions-type-state}
\end{thm}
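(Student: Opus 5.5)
The plan is to reduce the statement to Kruskal's theorem on the uniqueness of three-way tensor decompositions, in the form used by Allman, Matias and Rhodes \cite{AllmanMatiasRhodes2009}. This is the same route used for Theorem \ref{prop:ident-3occasions}.

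Recall the Kruskal rank of a matrix: it is the largest $k$ such that every set of $k$ columns is linearly independent. Kruskal's theorem says the following. Suppose $S=\sum_{h}\pi(h)\bigotimes_{j=1}^{3}M_{j}\boldsymbol{1}_{h}$, with Kruskal ranks $k_{j}$ of $M_{j}$, satisfies $k_{1}+k_{2}+k_{3}\geq2r+2$. Then the decomposition into rank-one terms is unique up to simultaneous permutation and rescaling of the columns. Because each column of $M_{j}$ sums to one, the rescaling is pinned down. This recovers $\pi$ and every $M_{j}$ up to label swapping. It therefore suffices to show that, generically in $(\pi,f_{1},f_{2},f_{3})$, each $M_{j}$ has Kruskal rank $r$, i.e.\ has full column rank; then $k_{1}+k_{2}+k_{3}=3r\geq2r+2$ whenever $r\geq2$. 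The case $r=1$ is trivial, since $\pi=1$ and each $M_{j}$ is read off the marginal.

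The key step uses the hypothesis that, for each occasion $j$, the conditions of Theorem \ref{prop:gen-indep-cols-iff} hold. By that theorem there is a set $R_{j}\subseteq X_{j}$ with $|R_{j}|=r$ for which $\det(M_{j})_{R_{j}}$, viewed as a polynomial in $f_{j}$, is not identically zero. The proof of Theorem \ref{prop:gen-indep-cols-iff} shows this directly: the Leibniz expansion groups terms by state-usage vectors $\Gamma$, and condition (ii) prevents complete cancellation. Hence the set of $f_{j}$ on which $M_{j}$ fails to have full column rank lies in the proper subvariety $\{\det(M_{j})_{R_{j}}=0\}$.

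Next, we pass to the joint parameter space. Its parameters are $\pi$ together with $f_{1},f_{2},f_{3}$; the $A_{j}$ are fixed and the $M^{a}_{j}$ are known $0$--$1$ matrices. The product $\prod_{j}\det(M_{j})_{R_{j}}$ is a polynomial on this space. It is not identically zero, because each factor depends on a disjoint block of variables and is nonzero there. Its zero set is therefore a proper subvariety, and off this set every $M_{j}$ has Kruskal rank $r$. We also need to exclude points where some $\pi(h)=0$; this is again a proper subvariety. On the remaining generic set, Kruskal's theorem gives uniqueness of $\{\pi,M_{1},M_{2},M_{3}\}$ up to label swapping.

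The main obstacle is a subtlety in the conclusion rather than the algebra. The statement asserts identification of the matrices $M_{j}$, which is what Kruskal delivers; I would take that reading. If one wanted the parameters $f_{j}$ themselves, a further step would be needed. Given $M_{j}$, the map $f_{j}\mapsto\sum_{a}f_{j}(a)M_{j}^{a}$ is linear, so $f_{j}$ is recovered exactly when the matrices $\{M^{a}_{j}\}$, restricted to the affine simplex, are linearly independent. This need not hold, so I would state identification of $M_{j}$ only, as in the theorem. A second minor point is checking that the column normalisation fixes the scalings in Kruskal's theorem, which follows because the entries of each column of $M_{j}$ sum to one.
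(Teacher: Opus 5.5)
Your proposal is correct and follows the paper's own proof. Theorem \ref{prop:gen-indep-cols-iff} gives each $M_{j}$ full column rank (hence Kruskal rank $r$) off a proper subvariety; you then discard the points where some $\pi(h)=0$ and invoke Corollary 2 of \cite{AllmanMatiasRhodes2009} via $3r\geq 2r+2$, with your separate treatment of $r=1$ (where that inequality fails) being a small point the paper leaves implicit.
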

\begin{proof}
Please refer to Appendix \ref{app:10}.
\end{proof}
Theorem \ref{prop:ident-3occasions-type-state} provides a sufficient
condition for the generic identification of $\pi$ and $M_{j}$, $j\in\left\{ 1,2,3\right\} $.
It is sometimes also possible to recover $f_{j}$ from Equation \eqref{eq:matrix-conv-comb}.
In particular, this occurs when there exists a type that selects a
different alternative in every state $a\in A_{j}$. More generally,
Equation \eqref{eq:matrix-conv-comb} defines a system of linear equations
in the unknowns $\left\{ f(a)\right\} _{a\in A_{j}}$, which may or
may not admit a unique solution.

Theorems \ref{prop:gen-indep-cols-iff} and \ref{prop:ident-3occasions-type-state}
can be generalized to the case where the state distribution $f_{j}$
depends on the type. In that case, instead of Equation \eqref{eq:matrix-conv-comb}
we would have that each column of $M_{j}$ is a different convex combination
of the corresponding columns of $\left\{ M^{a}:a\in A_{j}\right\} $.

\section{Applications}\label{sec:Applications}

In this section we illustrate concretely how our framework can be
applied, by studying specific environments.

\subsection{Choice over vectors of characteristics}

Suppose that consumer goods are described in terms of measurable characteristics
(after Lancaster \cite{Lancaster1966}), or \emph{features}. Denote
by $F=\{1,\dots,N\}$ the set of possible features.\footnote{\cite{KopsManziniMariottiPasichnichenko2022} studies a version of
this model using only \emph{deterministic} choices.} A good $x=\left(x_{1},...,x_{N}\right)$ is described by the amounts
$x_{i}$, for all features $i\in F$, of that good (while we use the
term ``good'' for simplicity, as explained below we allow features
to be undesirable). 

Consumers choose from feasible set $Y\subseteq\mathbb{R}^{N}$. We
assume that the feasible set $Y$ is a nonempty closed bounded convex
set of $\mathbb{R}^{N}$ with the boundary denoted by $\partial Y$.

A \textit{type }\textit{\emph{here is an}} $I\subseteq F$ corresponding
to the nonempty set of features that the consumer considers relevant
for the description of the alternatives. A consumer type $I$ chooses
an $x^{*}\in Y$ that is ``optimal'' with respect to the features
$I$. The next definition specifies the notion of optimality. 
\begin{defn}
Given an evaluation function $e:F\to\left\{ -1,0,1\right\} $, a point
$x\in Y$ is \textit{$e$-admissible} if $y\ensuremath{\notin Y}$
whenever $e\left(i\right)y_{i}\geq e\left(i\right)x_{i}$ for all
$i\in F$ and $e\left(i^{*}\right)y_{i^{*}}>e\left(i^{*}\right)x_{i^{*}}$
for some $i^{*}\in F$.
\end{defn}
The consumer's evaluation function $e$ indicates if the consumer
values a feature positively, negatively, or ignores it. Clearly, for
a consumer of type $I$, we impose that $e\left(i\right)\neq0$ if
and only if $i\in I$. The \textit{\emph{behavioral assumption}} is
that for each consumer there exists an (unobserved) evaluation function
$e$ such that the choice $x^{*}\in Y$ is \textit{$e$}-admissible\textit{.
}This requirement is similar to Pareto optimality; however, it is
applied only to the dimensions in $I$, and the directions of monotonicity
specified by $e$ are not a priori known to the analyst.
\begin{defn}
\label{def:type-possible-at-x}A type $I\subseteq F$ is \emph{possible}
at $x^{*}\in Y$ if there exists an evaluation function $e$ such
that $e(i)\neq0\iff i\in I$ and $x^{*}$ is \textit{$e$}-admissible.
\end{defn}
Let $X\subset Y$ be a finite set of goods with positive choice shares.
For each type $I\subseteq F$ and each good $x\in X$, we write $I\land x$
if type $I$ is possible at $x\in Y$ according to Definition \ref{def:type-possible-at-x}.

As an example, let $N=2$, so the types are $\left\{ 1\right\} $,
$\left\{ 2\right\} $, and $\left\{ 1,2\right\} $. The feasible set
$Y$ is shown in Figure~\ref{fig:rhombus}. As one can readily verify,
the possible types at each goods $X=\left\{ x,y,z,w\right\} $ coincide
with those described by $M_{j}$ in Example \ref{example multiocc 1}.
Hence, Theorem \ref{prop:generic-ident} guarantees that the distribution
of the types $\left\{ 1\right\} $, $\left\{ 2\right\} $, and $\left\{ 1,2\right\} $
can be generically recovered from the observed choices. With three
occasions, Theorem \ref{prop:ident-3occasions} further shows that
we can generically identify both the type distribution and the matrices
$M_{1}$, $M_{2}$, and $M_{3}$.

The type-state specialization of the general framework can be applied
to a population of consumer types that satisfy a linearity restriction.
Formally, let us say that a type $I\subseteq F$ is \emph{linear}
at $x^{*}\in Y$ if there exists an $a\in\mathbb{R}^{N}$ such that
$a_{i}\neq0\iff i\in I$ and $x^{*}\in\argmax_{x\in Y}\left\langle a,x\right\rangle $.
It is easy to show that if a type is linear at $x^{*}$, then it is
also possible at $x^{*}$. 

For this application, we assume that the feature coefficients are
drawn from a finite set of values. Thus, $f_{j}$ is interpreted as
a probability measure defined on a finite set $A_{j}\subset\mathbb{R}^{N}$
of feature values vectors. After a realization of $a$ according to
$f_{j}$, the consumer solves 
\begin{equation}
\max_{x\in Y_{j}}\left\langle \kappa_{I}(a),x\right\rangle \label{eq:agent's-problem-projected}
\end{equation}
where $\kappa_{I}:\mathbb{R}^{N}\to\mathbb{R}^{N}$ is defined as
$\kappa_{I}(a)_{i}=a_{i}$ if $i\in I$ and $\kappa_{I}(a){}_{i}=0$
otherwise.

We exclude the non-generic case when the consumer problem \eqref{eq:agent's-problem-projected}
does not have a unique solution.\footnote{Alternatively, a tie-breaker assumption can be used.}
Thus, Equation \eqref{eq:agent's-problem-projected} determines a
choice function in the sense previously specified, and the results
of Section \ref{sec:One-shot-choice Type-State} can be applied to
ensure that the distribution of consumer types can be identified.

Some additional identification analysis can be carried out in the
model: in fact, $M_{j}$ contains some information about the evaluation
functions in the consumer population. Let us consider the example
of Figure \ref{fig:rhombus} again. Recall that each consumer's choice
is $e$-admissible with respect to the consumer's evaluation function
$e$. The first column of $M_{j}$ describes the tastes of the consumers
for whom only feature $x_{1}$ matters. Precisely, $M_{j}(1,1)$ is
the share of $e(1)=1$ ($x_{1}$ is desirable) and $M_{j}(3,1)$ is
the share of $e(1)=-1$ ($x_{1}$ is undesirable). Likewise, the second
column describes the tastes of the type $\left\{ 2\right\} $ agents.
Finally, the third column describes the tastes of the type $\left\{ 1,2\right\} $
agents. For example, the consumers choosing $y$ ($w$) must evaluate
both features positively (resp., negatively), while this is less certain
for those choosing $x$ or $z$. If the population consists of only
linear types, then the third column of $M_{j}$ shows the shares of
the type $\left\{ 1,2\right\} $ consumers whose vectors of feature
values belong to the normal cones at points $x$, $y$, $z$, and
$w$.

\begin{figure}
\centering{}

\includegraphics{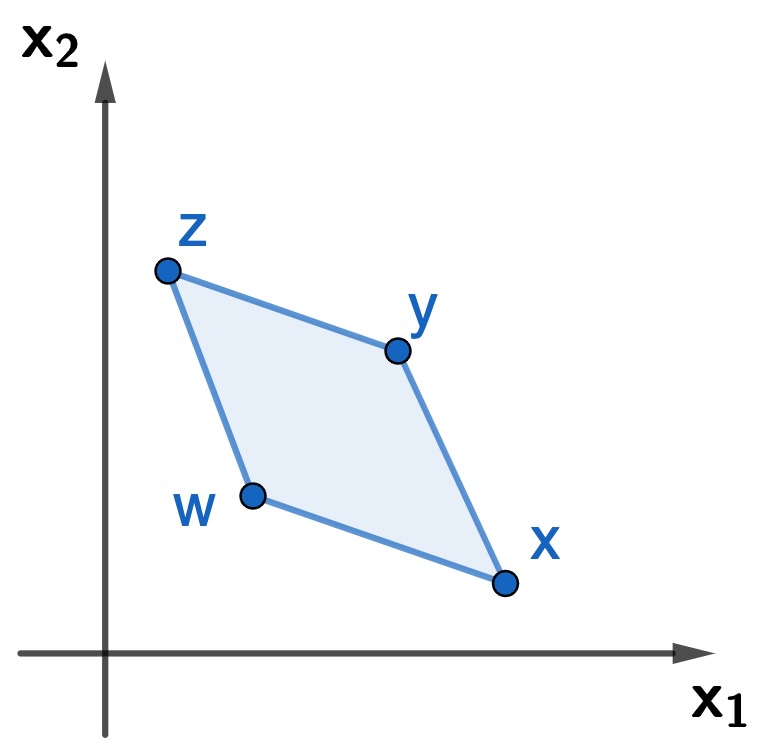}

\caption{Choice problem in Example \ref{example multiocc 1}}
\label{fig:rhombus}
\end{figure}

\subsection{Incomplete preferences}

Suppose that the set $X=\left\{ x_{1},\ldots,x_{n}\right\} $ of alternatives
is linearly ordered from best to worst according to a strict preference
relation $\succ$, i.e., $x_{1}\succ x_{2}\succ\dots\succ x_{n}$.
Agents' preferences are \emph{incomplete} in the sense that alternatives
that are sufficiently close in this ordering are treated as incomparable
or indifferent.\footnote{Incomplete preferences have been studied extensively in decision theory
and social choice, both as a primitive departure from classical rationality
(Aumann \cite{Aumann1962}, Dubra \textit{et al.} \cite{Dubra2004})
and as an outcome of bounded rationality or limited discrimination
(Masatlioglu \textit{et al.} \cite{Masatlioglu2012}, Manzini and
Mariotti \cite{manmar12}).} Formally, an agent of type $t_{l}$ can distinguish between two alternatives
only if they are at least $l$ positions apart in the ordering induced
by $\succ$. Let $\succ_{l}$ denote the preference relation of type
$t_{l}$. It follows that each $\succ_{l}$ is a coarsening of $\succ$:
$\succ_{1}=\succ$, $\succ_{l+1}\subset\succ_{l}$ for all $l$, and
$\succ_{n}$ is the empty relation. The set of types is given by $T=\left\{ t_{1},\dots t_{n}\right\} $.

Each type never chooses a dominated alternative and randomizes among
undominated ones. In this sense, $t_{1}$ represents the fully rational
type, whereas $t_{n}$ corresponds to the fully random type. Observe
that $t_{l}\land x_{k}$ if and only if $l\ge k$. Consequently, the
parameters populate an upper-triangular type-conditional choice matrix
$M$. This structure guarantees the existence of a unique matching,
and the parameters in $\pi$ are globally identifiable from the observed
choice probabilities $p=M\pi$ by Theorem \ref{prop:generic-ident}.
Moreover, if data from multiple choice occasions are available, the
analyst can also identify the entries of $M$ (the randomization parameters).
In particular, in the three-occasion setting discussed in Section
\ref{sec:Inference-from-multiple}, we have $v_{1}=v_{2}=v_{3}=n$
and $r=n$. Hence, both conditions of Theorem \ref{prop:ident-3occasions}
are satisfied, implying that the parameters in $\pi$, $M_{1},$ $M_{2},$
and $M_{3}$ are generically identifiable.
\begin{example}
\label{ex:incomplete-pref}Let $X=\left\{ x,y,z,w\right\} $ and $x\succ y\succ z\succ w$.
The induced preference relations for the different types are given
by
\[
\succ_{1}=\succ,\ \left\{ x\succ_{2}z,x\succ_{2}w,y\succ_{2}w\right\} ,\ \left\{ x\succ_{3}w\right\} ,\ \succ_{4}=\emptyset.
\]
The corresponding sets of undominated alternatives are therefore $\left\{ x\right\} $
for type $t_{1},$ $\left\{ x,y\right\} $ for type $t_{2}$, the
top three alternatives for type $t_{3}$, and all four alternatives
for type $t_{4}$. As a result, the type-conditional choice matrix
$M$ takes the form

\[
M = \bordermatrix{~ & t_1 & t_2 & t_3 & t_4 \cr
               x & \bullet & \bullet & \bullet & \bullet \cr
               y & 0 & \bullet & \bullet & \bullet \cr
               z & 0 & 0 & \bullet & \bullet \cr
               w & 0 & 0 & 0 & \bullet \cr}.
\]This matrix clearly admits a unique matching, implying that the type
distribution is globally identifiable. Moreover, if data from multiple
choice occasions are available, the choice matrix $M$ itself is also
generically identifiable.
\end{example}
The type-state specification of the framework is useful when the exact
nature of the agents\textquoteright{} randomization behavior is known.
To illustrate, suppose that agents resolve incompatibilities or indifferences
under the influence of the salience of the alternatives. Formally,
let each state $a\in A$ specify a linear salience ordering $\vartriangleright_{a}$
over the set $X$ of alternatives. An agent of type $t_{l}$ in state
$a$ chooses the maximizer of $\vartriangleright_{a}$ over the agent's
set $\left\{ x_{1},\dots,x_{l}\right\} $ of undominated alternatives.
Each state thus generates a type-conditional choice matrix $M^{a}$
with one-hot columns. The overall type-conditional choice matrix is
obtained by aggregation, $M=\sum_{a\in A}M^{a}$. According to Theorem
\ref{prop:ident-given-a}, the type distribution $\pi$ is generically
identifiable from $p=M\pi$ if the type is fully identified in at
least one state $a^{*}\in A$. 

One case when the type is fully identified given $a^{*}\in A$ arises
when the salience ordering $\vartriangleright_{a^{*}}$ is the \emph{reverse}
of the preference ordering $\succ$. In this case, each type $t_{l}$
selects $x_{l}$, i.e., $c_{t_{l}}(a^{*})=x_{l}$ for all $l$, and
the corresponding choice matrix $M^{a^{*}}$ is the identity matrix.
In fact, the presence of the salience ordering that is opposite to
$\succ$ is not only sufficient but also necessary for identification.
Indeed, such an ordering is the only one under which the worst alternative
$x_{n}$ can be matched to a type. Hence, it follows from Theorem
\ref{prop:gen-indep-cols-iff} that the parameters in $\pi$ are generically
identifiable if and only if there exists some $a^{*}\in A$ such that
$\vartriangleright_{a^{*}}$ is the reverse of $\succ$. 

As an illustration, consider Example \ref{ex:incomplete-pref} with
two states $A=\left\{ a,b\right\} $, where the salience ordering
$\vartriangleright_{a}$ coincides with the preference ordering $\succ$,
while $\vartriangleright_{b}$ is the reverse of $\succ$. Let $f(a)$
and $f(b)$ denote the probabilities of states $a$ and $b$, respectively.
The equation $f(a)M^{a}+f(b)M^{b}=M$ then takes the form

\[
f(a)\left[\begin{array}{cccc}
1 & 1 & 1 & 1\\
0 & 0 & 0 & 0\\
0 & 0 & 0 & 0\\
0 & 0 & 0 & 0
\end{array}\right]+f(b)\left[\begin{array}{cccc}
1 & 0 & 0 & 0\\
0 & 1 & 0 & 0\\
0 & 0 & 1 & 0\\
0 & 0 & 0 & 1
\end{array}\right]=\left[\begin{array}{cccc}
f(a)+f(b) & f(a) & f(a) & f(a)\\
0 & f(b) & 0 & 0\\
0 & 0 & f(b) & 0\\
0 & 0 & 0 & f(b)
\end{array}\right].
\]
For a generic choice of $f$, the resulting matrix $M$ is invertible,
implying that the distribution $\pi$ is globally identifiable.

\section{Concluding remarks}

We provide necessary and sufficient conditions under which aggregate
choice shares from a single set of alternatives guarantee identifiability
(generic or global) of the type distribution, as well as sufficient
conditions for generic identification of both the type distribution
and the type-level choice probabilities. A central message of our
analysis is that the informational content of aggregate choice data
depends less on its volume per se than on the structure of the qualitative
behavioral patterns it embeds. Strong identification results can be
obtained without imposing parametric structure on decision rules,
and without requiring rich menu variation or many repeated observations.
At the same time, our results clarify the limits of what can be learned
from aggregate data: when behavioral heterogeneity is insufficient,
even generic identification breaks down, regardless of the amount
of data available.

While the broad link between diversity in behvior and identifiability
is intuitive, a contribution of our analysis is to make the required
notion of \textquotedblleft qualitative behavioral heterogeneity\textquotedblright{}
precise. We do so by characterizing it combinatorially, through the
notion of ``matchings'' between types and alternatives, and by providing
an equivalent operationalization in algebraic terms, via interpretable
nullspace conditions on the relevant matrices.

By working with abstract behavioral types rather than more specific
primitives, the framework developed here is designed to apply to a
wide range of behavioral theories---such as limited consideration,
salience-based choice, status quo bias, or incomplete preferences---as
well as to more ad hoc behavioral hypotheses tailored to particular
contexts. Moreover, the qualitative nature of the characterizing properties
frees the researcher from the need for precise knowledge of type-level
choice probabilities, even at a merely parametric level.

We have focused on identification under extremely undemanding informational
assumptions. In many empirical settings, the analyst will have access
to additional information, such as observations from multiple menus,
covariates, or partial knowledge of the cognitive processes underlying
choice. An important direction for future research is to extend the
framework developed here to incorporate such additional sources of
information and to assess the extent to which they further strengthen
identification.

\clearpage{}

\appendix

\section{Proofs}

\subsection{Proof of Theorem \ref{prop:generic-ident}}\label{app:1}
\begin{proof}
Note that generic identifiability of $\pi$ here is equivalent to
the columns of $M$ being generically independent.

If the columns of $M$ are generically independent, then there exists
an $r\times r$ submatrix $M'$ of $M$ such that $\det(M')$ is not
identically 0. Note that $\det(M')$ is a multivariate polynomial
in the parameters $\left\{ M'(k,l):t_{l}\land x_{k}\right\} $. From
the Leibniz determinant formula 
\[
\det(M')=\sum_{\sigma\in S_{r}}\textup{sgn}(\sigma)\prod_{l=1}^{r}M'(\sigma(l),l),
\]
(where $S_{r}$ denotes the set of permutations on $\left\{ 1,...,r\right\} $)
it follows that there exists a permutation $\sigma^{*}$ of the rows
of $M'$ such that the monomial
\[
\prod_{l=1}^{r}M'(\sigma^{*}(l),l)
\]
 is not identically 0. This is possible only if all of the variables
in the monomial are not identically 0. Hence, $t_{l}\land x_{\sigma^{*}(l)}$
for all $l\in\left\{ 1,\dots,r\right\} $ and there exists a matching
$m$ defined as $m(t_{l})=x_{\sigma^{*}(l)}$. This proves one direction
of the assertion on generic identifiability in the statement.

Conversely, if there exists a matching $m$, then we can define a
permutation $\sigma^{*}$ for all $l\in\left\{ 1,\ldots,r\right\} $
as the unique permutation satisfying equation $m(t_{l})=x_{\sigma^{*}(l)}$.
Then, we have $t_{l}\land x_{\sigma^{*}(l)}$ for all $l\in\left\{ 1,\dots,r\right\} $.
Form an $r\times r$-submatrix $M'$ of $M$ using the rows corresponding
to the alternatives in the image of $m$. Choosing the point in the
parameter space defined by $M'(k,l)=1$ if $k=\sigma^{*}(l)$ and
$M'(k,l)=0$ otherwise, we have $\det(M')=1$. Hence, $\det(M')$
is not identically 0, which means that the $r$ columns of $M$ are
generically independent. This concludes the proof of the other direction.

When the matching $m$ is unique, for all $\sigma\in S_{r}$ with
$\sigma\neq\sigma^{\ast}$ we have $M'(\sigma(l),l)=0$ for some $l$
(otherwise we would have found another matching different from $m$),
whereas $\prod_{l=1}^{r}M'(\sigma^{*}(l),l)\neq0$ since $M'(\sigma^{*}(l),l)>0$
for all $l$. Equivalently, in the Leibniz expansion there is exactly
one term that is nonzero, and we conclude that $\det(M')\neq0$ for
all parameter values. This shows that the columns of $M$ are independent
for all parameter values, and $\pi$ is globally identified.

Finally, if there exists no matching, then for any $r\times r$-submatrix
$M'$ of $M$, all terms in the Leibniz expansion are zero, and therefore
$\det(M')=0$ for any such submatrix. Then the rank of $M$ is less
than $r$, and its columns are linearly dependent for all values of
the parameters, showing that $\pi$ is structurally non-identified.
\end{proof}

\subsection{Proof of Theorem \ref{thm:full-id-SNS}}\label{App:full-id-SNS}
\begin{proof}
As mentioned in the main text, the following adapts standard arguments
in the theory of SNS matrices (see e.g. \cite{brusha95}, Chapter
6). Note that the existence of a matching implies $n\geq r$. For
any $R\subseteq X$ with $|R|=r$, let $M_{R}$ denote the $r\times r$
submatrix of $M$ obtained by retaining only the rows corresponding
to elements of $R$. A matching $m:T\to R$ is identified with the
permutation $\sigma_{m}$ defined by $m(t_{i})=x_{\sigma_{m}(i)}$.
Condition (ii) of the theorem says that all matchings $m:T\to R$
have the same parity.

To prove sufficiency, assume there exists $R\subseteq X$ with $|R|=r$
such that: (i) there exists a matching $m:T\to R$; and (ii) every
matching $m':T\to R$ has the same parity as $m$. Consider any admissible
parameter vector $(\pi,M)$ (that is, $M$ respects $\wedge$ and
$\pi\in\Delta(T)$). We claim that $M_{R}$ is nonsingular for every
admissible choice of its positive entries consistent with $\wedge$.
Indeed, expand $\det(M_{R})$ by the Leibniz formula: 
\[
\det(M_{R})=\sum_{\sigma\in S_{r}}\mathrm{sgn}(\sigma)\prod_{l=1}^{r}M_{R}\bigl(\sigma(l),l\bigr).
\]
A term is nonzero if and only if the corresponding assignment corresponds
to a matching $T\to R$. By (i) at least one term is nonzero. By (ii),
all nonzero terms correspond to matchings of the same parity, hence
all have the same sign $\mathrm{sgn}(\sigma)$. Since every nonzero
term is a product of positive entries, all nonzero terms add with
the same sign, so the determinant cannot vanish: 
\[
\det(M_{R})\neq0.
\]
Therefore $M_{R}$ is invertible and $\pi$ is globally identifiable.

For the necessity part, we prove the contrapositive. Assume no subset
$R\subseteq X$ with $|R|=r$ satisfies conditions (i)--(ii). This
means that, for every $R\subseteq X$ with $|R|=r$, either: 
\begin{itemize}
\item (a) there is no matching $T\to R$; or 
\item (b) there exist at least two matchings $T\to R$ of opposite parity. 
\end{itemize}
We will construct an admissible $n\times r$ matrix $M^{\ast}$ (i.e.
one respecting $\wedge$ and column-stochasticity) with $\mathrm{rank}(M^{\ast})<r$.
This will show that $\pi$ is not globally identifiable. We proceed
in two steps:

\medskip{}
\emph{Step 1}: for each $R$, build an admissible $n\times r$ matrix
$M^{(R)}$ with $\det\!\bigl(M_{R}^{(R)}\bigr)=0$\textbf{.}

\noindent Fix $R$ with $|R|=r$.

Case (a): no matching $T\to R$. Then every permutation $\sigma$
uses at least one zero entry of $M_{R}$, so every Leibniz monomial
is zero and $\det(M_{R})\equiv0$ for all matrices with that zero
pattern. Hence we can pick any admissible (column-stochastic) $M^{(R)}$
respecting $\wedge$ and obtain $\det(M_{R}^{(R)})=0$.

Case (b): two matchings of opposite parity exist. Let $m_{+}$ and
$m_{-}$ be two such matchings. 

Fix a scalar $\lambda>0$. Define an $r\times r$ matrix $\widetilde{M}_{R}(\lambda)$
respecting the same zero pattern as $M_{R}$ by setting: 
\[
\widetilde{M}_{R}(\lambda)\bigl(\sigma_{m_{+}}(l),l\bigr)=\lambda\quad\text{for all }l,
\]
all other allowed (non-forced-zero) entries in $\widetilde{M}_{R}(\lambda)$
equal to $1$, and forced zeros equal to $0$. Then, in $\det\bigl(\widetilde{M}_{R}(\lambda)\bigr)$,
the monomial corresponding to $m_{+}$ equals $\mathrm{sgn}(\sigma_{m_{+}})\,\lambda^{r}$.
Every other nonzero monomial contains at most $r-1$ factors equal
to $\lambda$ (because it differs from $m_{+}$ in at least one position).
This implies that for $\lambda$ large enough the sign of $\det\bigl(\widetilde{M}_{R}(\lambda)\bigr)$
coincides with $\mathrm{sgn}(\sigma_{m_{+}})$. 

Analogously, define $\widehat{M}_{R}(\lambda)$ by placing $\lambda$
along the matching $m_{-}$ and $1$ on all other allowed entries;
then for $\lambda$ sufficiently large, the sign of $\det\bigl(\widehat{M}_{R}(\lambda)\bigr)$
is $\mathrm{sgn}(\sigma_{m_{-}})=-\mathrm{sgn}(\sigma_{m_{+}}).$ 

Next, we normalize columns to make them stochastic: let $D_{\widetilde{}}(\lambda)$
be the diagonal matrix whose $\ell$th diagonal entry is the sum of
column $\ell$ of $\widetilde{M}_{R}(\lambda)$, and set 
\[
M_{R}^{+}(\lambda)=\widetilde{M}_{R}(\lambda)\,D_{\widetilde{}}(\lambda)^{-1}.
\]
Similarly, define $M_{R}^{-}(\lambda)$ from $\widehat{M}_{R}(\lambda)$.
It holds that
\[
\det\!\bigl(M_{R}^{+}(\lambda)\bigr)=\det\!\bigl(\widetilde{M}_{R}(\lambda)\bigr)\cdot\det\!\bigl(D_{\widetilde{}}(\lambda)^{-1}\bigr),
\]
and analogously for $\det\!\bigl(M_{R}^{+}(\lambda)\bigr)$. Since
$\det(D_{\widetilde{}}(\lambda)^{-1})>0$, the sign is preserved,
Thus, for $\lambda$ large enough, $\det(M_{R}^{+}(\lambda))>0$ and
$\det(M_{R}^{-}(\lambda))<0$.

Finally, we extend $M_{R}^{+}(\lambda)$ and $M_{R}^{-}(\lambda)$
to full $n\times r$ column-stochastic matrices $M^{+}$ and $M^{-}$
respecting $\wedge$ by choosing arbitrary allowed positive entries
on rows $X\setminus R$ and then renormalizing each column (again
preserving the sign/nonzeroness of the minor on $R$ up to a positive
factor). In particular, we can ensure: 
\[
\det\!\bigl(M_{R}^{+}\bigr)>0\qquad\text{and}\qquad\det\!\bigl(M_{R}^{-}\bigr)<0.
\]
By the continuity of the determinant along convex combinations, there
exists $\alpha\in(0,1)$ such that the matrix $M^{\left(R\right)}$
defined by
\[
M^{(R)}:=\alpha M^{+}+(1-\alpha)M^{-}
\]
satisfies $\det\!\bigl(M_{R}^{(R)}\bigr)=0$. Since the set of matrices
that respect the zero pattern and column-stochasticity is clearly
convex, $M^{(R)}$ is admissible.

This completes Step~1.

\medskip{}
\emph{Step 2}: by combining across the $R$, construct the desired
admissible $M^{\ast}$ with all $r\times r$ minors equal to $0$.

\noindent Let $\mathcal{R}=\{R\subseteq X:\ |R|=r\}$. Define 
\[
M^{\ast}\;:=\;\frac{1}{|\mathcal{R}|}\sum_{R\in\mathcal{R}}M^{(R)}.
\]
Being a convex combination of admissible matrices, $M^{\ast}$ is
admissible.

Fix any $R_{0}\in\mathcal{R}$. Consider the polynomial map that takes
$M$ to $\det(M_{R_{0}})$. This is multilinear in the rows indexed
by $R_{0}$ and continuous. By construction, among the summands in
$M^{\ast}$, the matrix $M^{(R_{0})}$ has $\det\bigl(M_{R_{0}}^{(R_{0})}\bigr)=0$.
Moreover, since in Step~1 we are free to choose each $M^{(R)}$ and
we only require $\det(M_{R}^{(R)})=0$ for its own $R$, we may (and
do) choose each $M^{(R)}$ so that all $r\times r$ minors vanish.\footnote{One way to do this is: in Step~1, after obtaining an admissible $M^{(R)}$
with $\det(M_{R}^{(R)})=0$, replace it by a convex combination with
a fixed admissible matrix of rank $<r$ respecting the same zero pattern
(which exists under our standing assumption), so that all minors vanish
while preserving admissibility.} Hence every $r\times r$ minor of $M^{\ast}$ is $0$, implying 
\[
\mathrm{rank}(M^{\ast})<r.
\]

\medskip{}
This means that the columns of $M^{*}$ are linearly dependent and
therefore $\pi$ is not globally identified.
\end{proof}

\subsection{Proof of Theorem \ref{prop:ident-given-a}}\label{app:2}
\begin{proof}
We will show that the columns of $M$ are generically independent.
Suppose that $a^{*}\in A$ separates types. Then there exists a matching
$m:T\to X$ such that, for all $t\in T$, $m(t)=c_{t}\left(a^{*}\right)$.
This implies that $n\geq r$. Form an $r\times r$-submatrix $M'$
of $M$ using the rows $\left\{ k:x_{k}=m(t),\;t\in T\right\} $.
The determinant of $M'$ is a polynomial in the parameters $\left\{ f(s)\right\} _{s\in A}$.
Consider the point in the parameter space where $f(s)=1$ if $s=a^{*}$
and $f(s)=0$ otherwise. At this point, we have that $M'$ is a permutation
matrix, which implies $det(M')\neq0$. Hence, the zero set of $det(M')$
is a proper subvariety and not the whole parameter space. We have
just shown that there exists an $r\times r$-minor of $M$ such that
its zero set is a proper subvariety. This implies that the columns
of $M$ are generically independent. 
\end{proof}

\subsection{Proof of Theorem \ref{prop:gen-indep-cols-iff}}\label{app:3}
\begin{proof}
Fix $R\subseteq X$ with $|R|=r$. For $a\in A$, it holds that 
\[
M_{R}=\sum_{a\in A}f(a)\,M_{R}^{a},
\]
where $M_{R}^{a}$ is the restriction of $M^{a}$ to the rows in $R$,
i.e. $M_{R}^{a}=\left(M^{a}\right)_{R}$. By the Leibniz formula,
and identifying each permutation in $S_{r}$ with the corresponding
matching, 
\[
\det(M_{R})=\sum_{m}\delta(m)\prod_{t\in T}M_{R}\bigl(m(t),t\bigr).
\]
where the summation is over all matchings $m:T\to R$ and $\delta\left(m\right)\in\left\{ -1,1\right\} $
is determined by the parity of the permutation associated with $m$.
For any matching $m:T\to R$ and any $t\in T$, 
\begin{align*}
M_{R}\bigl(m(t),t\bigr) & =\sum_{a\in A:c_{t}(a)=m(t)}f(a)=\sum_{\gamma:(m,\gamma)\in\mathcal{S}_{R}}f\left(\gamma\left(t\right)\right).
\end{align*}
Expanding the product of sums and observing that each choice of admissible
states $\left(\gamma\left(t\right)\right)_{t\in T}$ uniquely defines
a state-matching $\left(m,\gamma\right)\in{\cal S}_{R}$, we obtain
\begin{equation}
\det(M_{R})=\sum_{(m,\gamma)\in\mathcal{S}_{R}}\delta(m)\prod_{t\in T}f\bigl(\gamma(t)\bigr),\label{eq:det-sum-over-state-matchings-R}
\end{equation}
i.e.\ the determinant is a polynomial in $\{f(a)\}_{a\in A}$ whose
monomials are indexed precisely by state-matchings with image $R$.

\emph{Sufficiency}. Assume there exists $R\subseteq X$, $|R|=r$,
satisfying (i)--(ii). By (i), $\mathcal{S}_{R}\neq\varnothing$,
hence \eqref{eq:det-sum-over-state-matchings-R} contains at least
one monomial with nonzero coefficient. Condition (ii) rules out complete
cancellation of all monomials: indeed, cancellation of \eqref{eq:det-sum-over-state-matchings-R}
within every $\Gamma$-class would allow one to partition $\mathcal{S}_{R}$
into disjoint pairs $\big((m,\gamma),(m',\gamma')\big)$ with $\Gamma(m,\gamma)=\Gamma(m',\gamma')$
and $\delta(m')=-\delta(m)$ for every pair, contradicting (ii). Therefore,
$\det(M_{R})$ is not the zero polynomial. Hence $M$ has generically
full column rank $r$, and $\pi$ is generically identifiable from
$p=M\pi$.

\emph{Necessity.} Suppose $\pi$ is generically identifiable. Then
$M$ has generically full column rank $r$, so there exists at least
one $r\times r$ minor $M_{R}$ whose determinant is not identically
zero as a polynomial in $\{f(a)\}_{a\in A}$. Fix such an $R$. If
$\mathcal{S}_{R}=\varnothing$, then every product in the Leibniz
formula is identically zero, so $\det(M_{R})\equiv0$, a contradiction.
Thus (i) holds.

If (ii) fails for this $R$, then there exists a partition of $\mathcal{S}_{R}$
into disjoint pairs $\big((m,\gamma),(m',\gamma')\big)$ such that
$\Gamma(m,\gamma)=\Gamma(m',\gamma')$ and $m'$ differs from $m$
by an odd number of swaps, i.e.\ $\delta(m')=-\delta(m)$, for every
pair. For each pair, $\Gamma(m,\gamma)=\Gamma(m',\gamma')$ implies
\[
\prod_{t\in T}f\bigl(\gamma(t)\bigr)=\prod_{t\in T}f\bigl(\gamma'(t)\bigr),
\]
while $\delta(m')=-\delta(m)$ implies the corresponding contributions
in \eqref{eq:det-sum-over-state-matchings-R} cancel. Hence all terms
cancel and $\det(M_{R})\equiv0$, again contradicting the choice of
$R$. Therefore (ii) must hold for this $R$.

\emph{Structural non-identifiability}. Suppose that no $R\subseteq X$
with $|R|=r$ satisfies (i)--(ii). Then, for every $R$, the polynomial
$\det(M_{R})$ is identically zero, hence every $r\times r$ minor
of $M$ vanishes for every parameter value. Thus, $\operatorname{rank}(M)<r$
for all admissible parameters, and $\pi$ is structurally non-identifiable.
\end{proof}

\subsection{Proof of Theorem \ref{cor:global_id_typestate}}\label{App:Cor_global_id_typestate}
\begin{proof}
Fix $R\subseteq X$ with $|R|=r$. From the determinant expansion
established in the proof of Theorem \ref{prop:gen-indep-cols-iff},
with $M_{R}$ defined as there, we have 
\begin{equation}
\det(M_{R})=\sum_{(m,\gamma)\in\mathcal{S}_{R}}\delta(m)\prod_{t\in T}f(\gamma(t)),\label{eq:det-expansion}
\end{equation}
where $\delta(m)\in\{-1,1\}$ denotes the parity of the matching $m$.

\medskip{}
\emph{Sufficiency.} If condition (ii) holds, then $\delta(m)$ is
constant over all $(m,\gamma)\in\mathcal{S}_{R}$. Hence every nonzero
monomial in $\det(M_{R})$ has the same sign. The condition $\mathcal{S}_{R}\neq\varnothing$
ensures that at least one such monomial exists, and therefore $\det(M_{R})\neq0$
for every state distribution $f$ in the interior of the simplex.
It follows that $M_{R}$ is nonsingular for all admissible parameter
values, and $\pi$ is globally identifiable.

\medskip{}
\noindent\emph{Necessity. }Fix $R\subseteq X$ with $|R|=r$. If $\mathcal{S}_{R}=\varnothing$,
then the determinant expansion contains no monomials and hence it
is identically zero, in which case $M_{R}$ is singular for every
admissible state distribution. Therefore global identifiability requires
$\mathcal{S}_{R}\neq\varnothing$. 

\noindent Next, suppose condition (i) holds but condition (ii) fails.
Then there exist two state-matchings $(m,\gamma),(m',\gamma')\in\mathcal{S}_{R}$
with opposite parity, i.e.\ $\delta(m)=-\delta(m')$, and in the
expansion of $\det(M_{R})$ the monomials 
\[
\prod_{t\in T}f(\gamma(t))\qquad\text{and}\qquad\prod_{t\in T}f(\gamma'(t))
\]
appear with opposite signs.

\noindent Let 
\[
A^{*}:=\{\gamma(t):t\in T\}\cup\{\gamma'(t):t\in T\}
\]
be the set of states used by these two state-matchings. Fix a small
$\varepsilon>0$ and construct a strictly positive $f^{+}\in\Delta^ {}(A)$
as follows: assign probability mass $\varepsilon$ to each state in
$A\setminus A^{*}$, assign probability mass $\varepsilon$ to each
state in $A^{*}\setminus\{\gamma(t):t\in T\})$, and distribute the
remaining probability mass uniformly over the set $\{\gamma(t):t\in T\}$.
In this way all components of $f^{+}$ are strictly positive. For
$\varepsilon$ small enough the contribution of the monomial indexed
by $(m,\gamma)$ dominates the sum of all remaining monomials in \eqref{eq:det-expansion},
so that we obtain $\det(M_{R})>0$ at the parameter values $f^{+}$.

\noindent Analogously, define a strictly positive $f^{-}\in\Delta(A)$
by making the states in $\{\gamma'(t):t\in T\}$ receive the bulk
of the probability mass and all other states probability of order
$\varepsilon$. For $\varepsilon$ small enough this yields $\det(M_{R})<0$
at the parameter values $f^{-}$.

\noindent Finally, observe that the set of strictly positive points
in $\Delta(A)$ is convex, and by the continuity of the determinant
map there exists a combination $f^{\ast}\in\Delta^ {}(A)$ of $f^{+}$
and $f^{-}$ such that $\det(M_{R})=0$ at the parameter values $f^{\ast}$.
We have shown that $M_{R}$ is singular for some strictly positive
$f$, so that $\pi$ is not globally identifiable on the interior
of the simplex.
\end{proof}

\subsection{Proof of Corollary \ref{Th: global_typestate_sufficient}}
\begin{proof}
Fix the separating state $a^{*}$. By the definition of a separating
state, there exists a matching $m_{a^{*}}:T\to X$ with $m_{a^{*}}(t)=c_{t}(a^{*})$.
Write $R=\text{Im}(m_{a^{*}})$. If there exists a possible reassignment
$\varphi$ that is odd, then for each $t\in T$ choose a state $\gamma(t)\in A$
such that 
\[
c_{t}(\gamma(t))=c_{\varphi(t)}(a^{*}).
\]
Define $m:T\to R$ by $m(t):=c_{\varphi(t)}(a^{*})$. Since $\varphi$
is bijective, $\text{Im}(m)=R$, and $(m,\gamma)$ is a state-matching
for which $m$ differs from $m_{a}$ by an odd permutation. Hence,
by Theorem \ref{cor:global_id_typestate}, global identifiability
fails.

Conversely, suppose global identifiability fails. We show that there
exists a possible reassignment $\varphi$ that is odd. By Theorem
\ref{cor:global_id_typestate} there exist two state-matchings $(m,\gamma)$
and $(m_{a^{*}},a^{*})$ with $\text{Im}(m)=R$ such that $m$ differs
from $m_{a^{*}}$ by an odd number of swaps. Define $\varphi:=m_{a^{*}}^{-1}\circ m$.
Then $\varphi$ is an odd permutation of $T$. Moreover, for each
$t\in T$ we have 
\[
c_{t}(\gamma(t))=m(t)=c_{\varphi(t)}(a^{*}),
\]
so $t\leadsto\varphi(t)$ and $\varphi$ is the required possible
reassignment.
\end{proof}

\subsection{Proof of Theorem \ref{nullspace2by2}}\label{app:4}

\label{proof_nullspace2by2} 
\begin{proof}
$(i)\Rightarrow(ii):$ Suppose statement (ii) of the theorem fails,
i.e., ${\bf x}\in\operatorname{Null}(M^{a})\cap\operatorname{Null}(M^{b})$
and ${\bf x}\neq0$. Then,
\[
M{\bf x}=(f(a)M^{a}+f(b)M^{b}){\bf x}=f(a)M^{a}{\bf x}+f(b)M^{b}{\bf x}={\bf 0},
\]
irrespective of the values of $f(a)$ and $f(b)$. Therefore, for
all $f(a)$ and $f(b)$, the matrix $M=f(a)M^{a}+f(b)M^{b}$ is not
invertible. Hence, $M$ is not generically invertible, and the parameters
in $\pi$ are not generically identifiable.

$(ii)\Rightarrow(iii):$ Suppose statement (iii) of the theorem fails,
i.e., both $c_{t_{1}}(a)=c_{t_{2}}(a)$ and $c_{t_{1}}(b)=c_{t_{2}}(b)$.
This implies that the columns of $M^{a}$ corresponding to $t_{1}$
and $t_{2}$ are the same vectors. Analogously, for $M^{b}$. Since
these columns are one-hot vectors, it follows that 
\[
M^{a},M^{b}\in\left\{ \begin{bmatrix}1 & 1\\
0 & 0
\end{bmatrix},\begin{bmatrix}0 & 0\\
1 & 1
\end{bmatrix}\right\} 
\]
Note that irrespective of which of these two forms either of the two
matrices $M^{a}$ and $M^{b}$ take, it holds that 
\[
\operatorname{Null}(M^{a})=\operatorname{Null}(M^{b})=\operatorname{span}\left\{ \begin{bmatrix}\;\;\,1\\
-1
\end{bmatrix}\right\} 
\]
But, then $\operatorname{Null}(M^{a})\cap\operatorname{Null}(M^{b})\neq\{{\bf 0}\}$.
This concludes this part of the proof.

$(iii)\Rightarrow(i):$ Note that, WLOG we can assume that $c_{t_{1}}(a)\neq c_{t_{2}}(a)$.
So, the columns of $M^{a}$ corresponding to $t_{1}$ and $t_{2}$
are different vectors. Since these columns are one-hot vectors, this
implies that 
\[
M^{a}\in\left\{ \begin{bmatrix}1 & 0\\
0 & 1
\end{bmatrix},\begin{bmatrix}0 & 1\\
1 & 0
\end{bmatrix}\right\} 
\]
Observe that, any such $M^{a}$ is invertible. As such, by Theorem
\ref{invertiblegeneric}, it follows that $M=f(a)M^{a}+f(b)M^{b}$
is generically invertible. This concludes the proof. 
\end{proof}

\subsection{Proof of Theorem \ref{nullspace3by3}}\label{app:5}

\label{proof_nullspace3by3} 
\begin{proof}
$(i)\Rightarrow(ii):$ Suppose condition (ii) fails. First, suppose
that $\operatorname{Null}(M^{a})\cap\operatorname{Null}(M^{b})\neq\{{\bf 0}\}$.
Then, fix any non-trivial ${\bf z}$, satisfying ${\bf z}\in\operatorname{Null}(M^{a})$
and ${\bf z}\in\operatorname{Null}(M^{b})$ and observe that 
\[
M{\bf z}=(f(a)M^{a}+f(b)M^{b}){\bf z}=f(a)M^{a}{\bf z}+f(b)M^{b}{\bf z}={\bf 0}
\]
Since ${\bf z}$ is non-trivial, this implies that $\operatorname{Null}(M)\neq\{{\bf 0}\}$.
By the Invertible Matrix Theorem, it follows that $M$ is not invertible.

Next, suppose that $\operatorname{Null}(M^{a^{T}})\cap\operatorname{Null}(M^{b^{T}})\neq\{{\bf 0}\}$.
Then, fix any non-trivial ${\bf w}$, satisfying ${\bf w}\in\operatorname{Null}(M^{a^{T}})$
and ${\bf w}\in\operatorname{Null}(M^{b^{T}})$ and observe that 
\[
M{\bf w}=(f(a)M^{a^{T}}+f(b)M^{b^{T}}){\bf w}=f(a)M^{a^{T}}{\bf w}+f(b)M^{b^{T}}{\bf w}={\bf 0}
\]
Since ${\bf w}$ is non-trivial, this implies that $\operatorname{Null}(M^{T})\neq\{{\bf 0}\}$.
By the Invertible Matrix Theorem, it follows that $M$ is not invertible.
This concludes this part of the proof.

$(ii)\Rightarrow(iii):$ Suppose statement (iv) of the theorem fails,
i.e., there exist $t,t'\in T$ such that, for all $a\in A$, it holds
that $c_{t}(a)=c_{t'}(a)$, or, there exists an alternative $w\in X$
such that, for all $a\in A$ and all $t\in T$, it holds that $c_{t}(a)\neq w$.

First, consider the case of there existing $t,t'\in T$ such that,
for all $a\in A$, it holds that $c_{t}(a)=c_{t'}(a)$. WLOG, we can
assume that $t$ and $t'$ correspond to the first two columns of
the matrices. Then, this implies that, for all $a\in A$, the first
two columns of $M^{a}$ are the same vectors. In other words, for
all $a\in A$, it holds that 
\[
\begin{bmatrix}\;\;\,1\\
-1\\
\;\;\,0
\end{bmatrix}\in\operatorname{Null}(M^{a})
\]
But then, $\operatorname{Null}(M^{a})\cap\operatorname{Null}(M^{b})\neq\{{\bf 0}\}$.

Next, consider the case of there existing an alternative $w\in X$
such that, for all $a\in A$ and all $t\in T$, it holds that $c_{t}(a)\neq w$.
WLOG, we can assume that this $w$ corresponds to the last column
of the matrices. Then, this implies that, for all $a\in A$, the last
row of $M^{a}$ is the zero vector. In other words, for all $a\in A$,
it holds that 
\[
\begin{bmatrix}0\\
0\\
1
\end{bmatrix}\in\operatorname{Null}(M^{{a}^{T}})
\]
But then, $\operatorname{Null}(M^{{a}^{T}})\cap\operatorname{Null}(M^{{b}^{T}})\neq\{{\bf 0}\}$.
This concludes this part of the proof.

$(iii)\Rightarrow(i):$ For all $t,t'\in T$ with $t\neq t'$, there
exists an $a\in A$ such that $c_{t}(a)\neq c_{t'}(a)$, and, for
all $w\in X$, there exist $a\in A$ and $t\in T$ such that $c_{t}(a)=w$.
Note that this implies that not all terms in the Leibniz formula for
the determinant of $M$ are zero. Since the columns of $M^{a}$ and
$M^{b}$ are one-hot vectors, this implies that at least one such
term is unique in its form $f(a)^{\alpha}f(b)^{3-\alpha}$, for some
$\alpha\in\{1,2,3\}$. But, then $M$ is generically invertible. This
concludes the proof. 
\end{proof}

\subsection{Proof of Theorem \ref{invertiblegeneric}}\label{app:6}

\label{proof_invertiblegeneric} 
\begin{proof}
Consider an $r\times r$ matrix $N=\sum_{b\in B}f(b)M^{b}$ which
is generically invertible. By the Invertible Matrix Theorem, it follows
that $\det(N)\neq0$, for all $\{f(b)\}_{b\in B}$ but some Lebesgue
measure zero set. Fix an arbitrary $r\times r$ matrix $M^{a}$ and
define $M(\alpha)=\alpha M^{a}+(1-\alpha)N$ for all $\alpha\in\left[0,1\right]$.
Observe that $\det(M(\alpha))$ is a polynomial of degree at most
$r$. Since $\det(M(0))=\det(N)\neq0$, for all $\{f(b)\}_{b\in B}$
but some Lebesgue measure zero set, it follows that $\det(M(\alpha))$
is not the zero polynomial for all such $\{f(b)\}_{b\in B}$. This
implies that the polynomial vanishes for at most $r$ values of $\alpha$.
Since $r$ is finite, this implies that $\det(M(\alpha))\neq0$, for
all $\{f(a)\}_{a\in A}$ but some Lebesgue measure zero set. By the
Invertible Matrix Theorem, it follows that $M=\alpha M^{a}+(1-\alpha)N$
is generically invertible. 
\end{proof}

\subsection{Proof of Corollary \ref{invertiblethengenericinvertible}}\label{app:7}

\label{proof_invertiblethengenericinvertible} 
\begin{proof}
Fix a probability measure $\overline{f}(a)$ on $A$ such that $\overline{M}=\sum_{a\in A}\overline{f}(a)M^{a}$
is invertible. Now, observe that ${M}=\sum_{a\in A}{f(a)}M^{a}$,
for any probability measure $f$ on $A$, can be written as a convex
combination of $\overline{M}$ and a matrix $N$ which itself is some
convex combination of the $M^{a}$'s. In short, $M=\alpha\overline{M}+(1-\alpha)N$,
for some $\alpha\in(0,1)$ and $N$ given by $N=\sum_{a\in A}\underline{f}(a)M^{a}$,
for some probability measure $\underline{f}$ on $A$. Since $\overline{M}$
is invertible, Theorem \ref{invertiblegeneric} implies that $M=\alpha\overline{M}+(1-\alpha)N$
is generically invertible. Note that this is true for any such $N$.
Hence, ${M}=\sum{f(a)}M^{a}$ is generically invertible. 
\end{proof}

\subsection{Proof of Theorem \ref{nullspaces_general}}\label{app:8}

\label{proof_nullspaces_general} 
\begin{proof}
$(i)\Rightarrow(ii):$ Let $A=\{a,b\}$ and suppose that $M=f(a)M^{a}+f(b)M^{b}$
is generically invertible. By the Invertible Matrix Theorem, it then
follows that, for all $\{f(a)\}_{a\in A}$ but some Lebesgue-measure
zero set, it holds that $\operatorname{Null}(f(a)M^{a}+f(b)M^{b})=\{{\bf 0}\}$.
Now, towards a proof by contradiction, suppose that there exists some
${\bf z}$ with ${\bf z}\neq{\bf x}+{\bf y}$ such that for some non-trivial
${\bf x}\in\operatorname{Null}(M^{a})$ and some non-trivial ${\bf y}\in\operatorname{Null}(M^{b})$,
it holds that 
\[
(f(a)M^{a}+f(b)M^{b}){\bf z}=M^{a}{\bf y}+M^{b}{\bf x}
\]
Note that then there also exists some non-trivial ${\bf x}\in\operatorname{Null}(M^{a})$
and some non-trivial ${\bf y}\in\operatorname{Null}(M^{b})$ such
that 
\[
(f(a)M^{a}+f(b)M^{b}){\bf z}=f(a)M^{a}{\bf y}+f(b)M^{b}{\bf x}
\]
Now, that this implies that 
\[
(f(a)M^{a}+f(b)M^{b}){\bf w}={\bf 0}
\]
for ${\bf w}$ defined as ${\bf w}={\bf z}-{\bf x}-{\bf y}$. To see
this, observe that 
\begin{align*}
 &  & {\bf 0} & =(f(a)M^{a}+f(b)M^{b}){\bf w}\\
\Leftrightarrow &  & {\bf 0} & =(f(a)M^{a}+f(b)M^{b})({\bf z}-{\bf x}-{\bf y})\\
\Leftrightarrow &  & (f(a)M^{a}+f(b)M^{b}){\bf z} & =(f(a)M^{a}+f(b)M^{b})({\bf x}+{\bf y})\\
\Leftrightarrow &  & (f(a)M^{a}+f(b)M^{b}){\bf z} & =f(a)M^{a}{\bf x}+f(a)M^{a}{\bf y}+f(b)M^{b}{\bf x}+f(b)M^{b}{\bf y}\\
\Leftrightarrow &  & (f(a)M^{a}+f(b)M^{b}){\bf z} & =f(a)M^{a}{\bf y}+f(b)M^{b}{\bf x}
\end{align*}
where we used that ${\bf x}\in\operatorname{Null}(M^{a})$ implies
that $f(a)M^{a}{\bf x}={\bf 0}$ and that ${\bf y}\in\operatorname{Null}(M^{b})$
implies that $f(b)M^{b}{\bf y}={\bf 0}$. Since ${\bf z}\neq{\bf x}+{\bf y}$,
it follows that ${\bf w}={\bf z}-{\bf x}-{\bf y}\neq{\bf 0}$. As
such, $(f(a)M^{a}+f(b)M^{b}){\bf w}={\bf 0}$ implies that $\operatorname{Null}(f(a)M^{a}+f(b)M^{b})\neq\{{\bf 0}\}$.
By the Invertible Matrix Theorem, it follows that $M=(f(a)M^{a}+f(b)M^{b})\in\mathcal{M}_{r\times r}(\mathbb{R})$
is not generically invertible. Hence, we have arrived at our desired
contradiction. This concludes this part of the proof.

$(ii)\Rightarrow(iii):$ Suppose that condition (iii) in the statement
of the theorem does not hold. This implies that there exists $t,t'\in T$
such that WLOG $c_{t}(a)=c_{t'}(a)$, while the split between $c_{t}(b)$
and $c_{t'}(b)$ is typical of $T\setminus\{t,t'\}$.

First, consider the case where $c_{t}(b)=c_{t'}(b)$. That is, Column
$t$ and $t'$ of matrix $M^{b}$ coincide. Since $c_{t}(a)=c_{t'}(a)$,
also Column $t$ and $t'$ of matrix $M^{a}$ coincide. It follows
that there exists a non-trivial ${\bf x}\in\mathbb{R}^{r}$ such that
${\bf x}\in\operatorname{Null}(M^{a})$ and ${\bf x}\in\operatorname{Null}(M^{b})$.
That is, $M^{a}{\bf x}={\bf 0}$ and $M^{b}{\bf y}={\bf 0}$. But
then, fixing ${\bf z}={\bf 0}$ and ${\bf y}={\bf x}$, we have some
non-trivial a ${\bf x}\in\operatorname{Null}(M^{a})$, some non-trivial
${\bf y}\in\operatorname{Null}(M^{b})$, and some ${\bf z}\in\operatorname{span}\{{\bf x},{\bf y}\}^{\perp}$
such that 
\[
(f(a)M^{a}+f(b)M^{b}){\bf z}={\bf 0}=f(a)M^{a}{\bf y}+f(b)M^{b}{\bf x}
\]

Next, consider the case where $c_{t}(b)\neq c_{t'}(b)$. Definition
\ref{typical} then implies that the set $T\setminus\{t,t'\}$ is
non-empty. But, then the split between $c_{t}(b)$ and $c_{t'}(b)$
being typical of $T\setminus\{t,t'\}$ implies that 
\[
({\bf e}^{c_{t}(b)}-{\bf e}^{c_{t'}(b)})\in\operatorname{span}\{M_{*j}:\text{column }j\text{ corresponds to some type }s\in T\setminus\{t,t'\}\}
\]
where ${\bf e}^{w}$, for any $w\in\{x,y\}$, is defined such that,
for all $i=1,\dots,r$ 
\[
{\bf e}_{i}^{w}=\begin{cases}
1, & \text{if Row \ensuremath{i} corresponds to alternative \ensuremath{w}}\\
0, & \text{otherwise}
\end{cases}
\]
Defining ${\bf x}\in\mathbb{R}^{r}$ via 
\[
{\bf x}_{i}=\begin{cases}
\;\;\,1, & \text{if Row \ensuremath{i} corresponds to alternative \ensuremath{c_{t}(b)}}\\
-1, & \text{if Row \ensuremath{i} corresponds to alternative \ensuremath{c_{t'}(b)}}\\
\;\;\,0, & \text{otherwise}
\end{cases}
\]
this implies that 
\[
M^{b}{\bf x}\in\operatorname{span}\{M_{*j}:\text{column }j\text{ corresponds to some type }s\in T\setminus\{t,t'\}\}
\]
In other words, there exists ${\bf w}\in\operatorname{span}\{{\bf x}\}^{\perp}$
such that 
\begin{equation}
(f(a)M^{a}+f(b)M^{b}){\bf w}=f(b)M^{b}{\bf x}
\label{xx}
\end{equation}
By Theorem \ref{invertiblegeneric}, there also exist $s,s'\in T$
such that $c_{s}(b)=c_{s'}(b)$, while the split between $c_{s}(a)$
and $c_{s'}(a)$ is typical of $T\setminus\{s,s'\}$. As such, analogously
to above, we can define ${\bf y}\in\mathbb{R}^{r}$ via 
\[
{\bf y}_{i}=\begin{cases}
\;\;\,1, & \text{if Row \ensuremath{i} corresponds to alternative \ensuremath{c_{s}(a)}}\\
-1, & \text{if Row \ensuremath{i} corresponds to alternative \ensuremath{c_{s'}(a)}}\\
\;\;\,0, & \text{otherwise}
\end{cases}
\]
such that there exists ${\bf v}\in\operatorname{span}\{{\bf y}\}^{\perp}$
satisfying 
\begin{equation}
(f(a)M^{a}+f(b)M^{b}){\bf v}=f(a)M^{a}{\bf y}
\label{yy}
\end{equation}
Combing Equations \eqref{xx} and \eqref{yy}, we find that there
exists some non-trivial ${\bf x}\in\operatorname{Null}(M^{a})$, some
non-trivial ${\bf y}\in\operatorname{Null}(M^{b})$, and some ${\bf z}\in\operatorname{span}\{{\bf x},{\bf y}\}^{\perp}$
such that 
\[
(f(a)M^{a}+f(b)M^{b}){\bf z}=f(a)M^{a}{\bf y}+f(b)M^{b}{\bf x}
\]
Note that ${\bf x}\in\operatorname{Null}(M^{a})$ implies that also
$(f(b){\bf x})\in\operatorname{Null}(M^{a})$ and that ${\bf y}\in\operatorname{Null}(M^{b})$
implies that also $(f(a){\bf y})\in\operatorname{Null}(M^{b})$. Hence,
this implies that there also exists some non-trivial ${\bf x}\in\operatorname{Null}(M^{a})$,
some non-trivial ${\bf y}\in\operatorname{Null}(M^{b})$, and some
${\bf z}\in\operatorname{span}\{{\bf x},{\bf y}\}^{\perp}$ such that
\[
(f(a)M^{a}+f(b)M^{b}){\bf z}=M^{a}{\bf y}+M^{b}{\bf x}
\]

$(iii)\Rightarrow(i):$ Fix some $a,b\in A$ with $a\neq b$ and suppose
that, for all $t,t'\in T$ where $c_{t}(a)=c_{t'}(a)$, the split
between $c_{t}(b)$ and $c_{t'}(b)$ is not typical of $(T\setminus\{t,t'\})$.
For any such $t$ and $t'$, define ${\bf x}\in\mathbb{R}^{r}$ via
\[
{\bf x}_{i}=\begin{cases}
\;\;\,1, & \text{if the \ensuremath{i}'th column of \ensuremath{M} corresponds to type \ensuremath{t}}\\
-1, & \text{if the \ensuremath{i}'th column of \ensuremath{M} corresponds to type \ensuremath{t'}}\\
\;\;\,0, & \text{otherwise}
\end{cases}
\]
Then, ${\bf x}\in\operatorname{Null}(M^{a})$. Now, the split between
$c_{t}(b)$ and $c_{t'}(b)$ not being typical of $(T\setminus\{t,t'\})$
implies that, for any such non-trivial ${\bf x}$ and all ${\bf z}\in\operatorname{span}\{{\bf x}\}^{\perp}$,
it holds that 
\[
M{\bf z}\neq M^{b}{\bf x}
\]
Since ${\bf x}\in\operatorname{Null}(M^{a})$ and, thus, $M^{a}{\bf x}={\bf 0}$,
this implies that, for $f(b)\neq0$, it holds that 
\[
M{\bf z}\neq(f(a)M^{a}+f(b)M^{b}){\bf x}
\]
But, since any ${\bf w}\in\operatorname{Null}(M)$ can be written
as ${\bf w}={\bf z}-{\bf x}$, for some ${\bf x}\in\operatorname{Null}(M^{a})$
and some ${\bf z}\in\operatorname{span}\{{\bf x}\}^{\perp}$, this
implies that the only such ${\bf w}$ is the trivial one, i.e., ${\bf w}={\bf 0}$
and thus $\operatorname{Null}(M)=\{{\bf 0}\}$. By the Invertible
Matrix Theorem, it follows that $M$ is invertible. Hence, the convex
combination $f(a)M^{a}+f(b)M^{b}$ is generically invertible. 
\end{proof}

\subsection{Proof of Theorem \ref{prop:ident-3occasions}}\label{app:9}
\begin{proof}
As a first step, we will show that the Kruskal column rank\footnote{The Kruskal column rank of $M$ is the largest number $d$ such that
every set of $d$ columns of $M$ are independent.} of $M_{j}$ is at least $v_{j}$ generically, for each $j\in\left\{ 1,2,3\right\} $.
Take any $v_{j}$ types in $T$ and denote the set of these types
by $\bar{T}$. From \textit{(ii), }it follows that there is an injective
function $m:\bar{T}\to X_{j}$ such that $t\land_{j}m(t)$ for all
$t\in\bar{T}.$ Let $l(t)$ be such that $x_{l(t)}=m(t)$ and $k(t)$
be such that $t_{k(t)}=t$, for all $t\in\bar{T}$. Form a $v_{j}\times v_{j}$-submatrix
of $M_{j}$ using rows $\left\{ l(t):t\in\bar{T}\right\} $ and columns
$\left\{ k(t):t\in\bar{T}\right\} $. Observe that the determinant
of this matrix is a sum of monomials, one of which is $\prod_{t\in\bar{T}}M_{j}(l(t),k(t))$.
The set of parameter values contains points where, for all $t\in T$,
$M_{j}(l(t),k(t))=1$. At any of such points, the determinant of the
submatrix is equal to 1. Hence, the zero set of this determinant is
a proper subvariety and not the whole parameter space.

We have just shown that, for any selection of $v_{j}$ columns of
$M_{j}$, we can find a $v_{j}\times v_{j}$-minor of $M_{j}$ formed
of these columns such that its zero set is a proper subvariety. Hence,
for any $v_{j}$ columns of $M_{j}$, the columns are generically
independent. Denote by $\Theta_{j}$ the union of the zero sets of
these minors. Since this is a finite union, $\Theta_{j}$ is a proper
subvariety too. For any point in the parameter space that lies outside
of $\Theta_{j}$, all combinations of $v_{j}$ columns are independent.
This implies that the Kruskal column rank of $M_{j}$ is at least
$v_{j}$ generically.

For the second step, observe that outside of $\Theta_{0}=\cup_{j}\Theta_{j}$
the Kruskal column rank of $M_{j}$ is at least $v_{j}$ for all $j\in\left\{ 1,2,3\right\} .$
Denote by $\Theta$ the union of $\Theta_{0}$ and the set of parameter
values such that some component of $\pi$ is equal to 0. Clearly,
$\Theta$ is a proper subvariety. Outside $\Theta$, the sum of the
Kruskal column ranks of $M_{1}$, $M_{2}$, and $M_{3}$ is greater
than or equal to $v_{1}+v_{2}+v_{3}$ and all components of $\pi$
are strictly positive. Relying on the Kruskal's result \cite{Kruskal1977},
Corollary 2 of \cite{AllmanMatiasRhodes2009} says that, for $J=3$,
the parameters of the model \eqref{eq:choice-tensor-1-1} are uniquely
identifiable, up to label swapping, if the sum of the Kruskal column
ranks of $M_{1}$, $M_{2}$, and $M_{3}$ is greater than or equal
to $2r+2.$ From this and assumption \textit{(i)}, it follows that
the parameters outside $\Theta$ are uniquely identifiable, up to
label swapping. Hence, the parameters are generically identifiable,
up to label swapping. 
\end{proof}

\subsection{Proof of Theorem \ref{prop:ident-3occasions-type-state}}\label{app:10}
\begin{proof}
It follows from Theorem \ref{prop:gen-indep-cols-iff} that, for each
$j$, there is a proper subvariety $\Theta_{j}$ of the parameter
space such that outside $\Theta_{j}$ the Kruskal column rank of $M_{j}$
is $r$. Hence, outside of $\Theta_{0}=\cup_{j}\Theta_{j}$ the sum
of the Kruskal column ranks of the matrices $M_{1}$, $M_{2}$, and
$M_{3}$ is $3r$. Denote by $\Theta$ the union of $\Theta_{0}$
and the set of parameter values such that some component of $\pi$
is equal to 0. For any point outside $\Theta$ we can apply Corollary
2 of \cite{AllmanMatiasRhodes2009}. Since $\Theta$ is a proper subvariety,
the result follows. 
\end{proof}


\begin{thebibliography}{10}
\bibitem{AllmanMatiasRhodes2009}Allman, Elizabeth S., Catherine Matias,
and John A. Rhodes (2009) ``Identifiability of Parameters in Latent
Structure Models With Many Observed Variables,'' \emph{Annals of
Statistics} 37: 3099--3132.

\bibitem{Apesteguia2025}Apesteguia, José and Miguel-Anguel Ballester
(2025) \textquotedblleft A Measure of Behavioral Heterogeneity,\textquotedblright{}
\emph{American Economic Journal: Microeconomics}, forthcoming.

\bibitem{apeballu17}Apesteguia, José, Miguel-Anguel Ballester, and
Jay Lu (2017) \textquotedblleft Single-Crossing Random Utility Models,\textquotedblright{}
\emph{Econometrica} 85: 661--674.

\bibitem{Aumann1962}Aumann, Robert J. (1962) ``Utility Theory Without
the Completeness Axiom,'' \emph{Econometrica}: 445-462.

\bibitem{azrreh22}Azrieli, Yaron and John Rehbeck (2022) \textquotedblleft Marginal
Stochastic Choice,\textquotedblright{} \emph{arXiv}: 2208.08492 {[}econ.TH{]}.

\bibitem{barcoumoltei21}Barseghyan, Levon, Maura Coughlin, Francesca
Molinari, and Joshua C. Teitelbaum (2021) Heterogeneous Choice Sets
and Preferences, \emph{Econometrica }89(5): 2015-2048.

\bibitem{basmayqui68}Bassett, Lowell R., John S. Maybee, and James
Quirk (1968) ``Qualitative Economics and the Scope of the Correspondence
Principle,'' \emph{Econometrica} 36(3): 544--563. 

\bibitem{brusha95}Brualdi, Richard A. and Bryan L. Shader (1995)
``Matrices of Sign-Solvable Linear Systems,'' Cambridge University
Press, Cambridge.

\bibitem{Cappe2005}Cappé, Olivier, Eric Moulines, and Tobias Rydén
(2005) ``Inference in Hidden Markov Models,'' Springer, New York.

\bibitem{cartur24}Caradonna, Peter P. and Christopher Turansick (2026)
``Identification in Stochastic Choice,'' \emph{arXiv}: 2408.06547
{[}econ.TH{]}.

\bibitem{Chambers25}Chambers, Christopher P. and Christopher Turansick
(2025) \textquotedblleft The Limits of Identification in Discrete
Choice,\textquotedblright{} \emph{Games and Economic Behavior} 150:
537--551.

\bibitem{Cox97}Cox, David A., John Little, and Donal O'Shea (1997)
\textquotedblleft Ideals, Varieties, and Algorithms,\textquotedblright{}
2nd ed, Springer, New York.

\bibitem{DardanoniManziniMariottiTyson2020}Dardanoni, Valentino,
Paola Manzini, Marco Mariotti, and Christopher J. Tyson (2020) ``Inferring
Cognitive Heterogeneity from Aggregate Choices,'' \emph{Econometrica}
88(3): 1269-1296.

\bibitem{darmanmarpettys24}Dardanoni, Valentino, Paola Manzini, Marco
Mariotti, Henrik Petri, and Christopher J. Tyson (2024) ``Mixture
Choice Data: Revealing Preferences and Cognition,'' \emph{Journal
of Political Economy} 131(3): 687-715.

\bibitem{doisai23}Doignon, Jean-Paul and Kota Saito (2023) \textquotedblleft Adjacencies
on Random Ordering Polytopes and Flow Polytopes,\textquotedblright{}
\emph{Journal of Mathematical Psychology}, 114: 102768.

\bibitem{Dubra2004}Dubra, Juan, Fabio Maccheroni, and Efe A. Ok.
(2004) ``Expected Utility Theory Without the Completeness Axiom,''
\textit{Journal of Economic Theory} 115.1: 118-133.

\bibitem{Ephraim2002}Ephraim, Yariv and Neri Merhav (2002) \textquotedblleft Hidden
Markov Processes,\textquotedblright{} \emph{IEEE Transactions on Information
Theory} 48, 1518-1569.

\bibitem{filmas23}Filiz-Ozbay, Emel and Yusufcan Masatlioglu (2023)
\textquotedblleft Progressive Random Choice,\textquotedblright{} \emph{Journal
of Political Economy} 131: 716--750.

\bibitem{fish98}Fishburn, Peter C. (1998) \textquotedblleft Stochastic
Utility,\textquotedblright{} in \emph{Handbook of Utility Theory},
ed. by S. Barbera, P. Hammond, and C. Seidl, Kluwer Dordrecht; 273--318.

\bibitem{gor64}Gorman, William M. (1968) ``More Scope for Qualitative
Economics,'' \emph{The Review of Economic Studies} 31: 65-68. 

\bibitem{Hall1935}Hall, Philip (1935) ``On Representatives of Subsets,''
\emph{Journal of the London Mathematical Society }1: 26--30.

\bibitem{Koopmans1950a}Koopmans, Tjalling C. and Olav Reisersøl (1950)
\textquotedblleft The identification of Structural Characteristics,\textquotedblright{}
\emph{The Annals of Mathematical Statistics} 21: 165--181.

\bibitem{Koopmans1950b}Koopmans, Tjalling C. (1950) \textquotedblleft Statistical
Inference in Dynamic Economic Models,\textquotedblright{} Wiley, New
York.

\bibitem{KopsManziniMariottiPasichnichenko2022}Kops, Christopher
J., Paola Manzini, Marco Mariotti, and Illia Pasichnichenko (2025)
``Revealing Features from Pareto Optimal Choice,'' Working paper.

\bibitem{Kruskal1977}Kruskal, Joseph B. (1977) ``Three-Way Arrays:
Rank and Uniqueness of Trilinear Decompositions, With Application
to Arithmetic Complexity and Statistics,'' \emph{Linear Algebra and
its Applications} 18: 95--138.

\bibitem{Lancaster62}Lancaster, Kelvin J. (1962) ``The Scope of
Qualitative Economics,'' \emph{The Review of Economic Studies} 29(2):
99--123.

\bibitem{Lancaster1966}Lancaster, Kelvin J. (1966) ``A New Approach
to Consumer Theory,'' \emph{Journal of Political Economy} 74(2):
132-157.

\bibitem{Lindsay1995}Lindsay, Bruce G. (1995) ``Mixture Models:
Theory Geometry and Applications,'' \emph{NSF-CBMS Regional Conference
Series in Probability and Statistics} 5, IMS, Hayward, CA.

\bibitem{Lu2019}Lu, Jay (2019) ``Bayesian Identification: A Theory
for State-Dependent Utilities,'' \emph{American Economic Review}
109(9): 3192-3228.

\bibitem{manmar12}Manzini, Paola and Marco Mariotti (2012) ``Categorize
Then Choose: Boundedly Rational Choice and Welfare,'' \textit{Journal
of the European Economic Association} 10.5: 1141-1165.

\bibitem{manmar18}Manzini, Paola and Marco Mariotti (2018) \textquotedblleft Dual
Random Utility Maximisation,\textquotedblright{} \emph{Journal of
Economic Theory} 177: 162--182.

\bibitem{manmarpet19}Manzini, Paola, Marco Mariotti, and Henrik Petri
(2019) ``Corrigendum to \textquotedblleft Dual Random Utility Maximisation\textquotedblright{}
{[}J. Econ. Theory 177 (2018) 162--182{]},'' \emph{Journal of Economic
Theory} 184: 104944.

\bibitem{Masatlioglu2012}Masatlioglu, Yusufcan, Daisuke Nakajima,
and Erkut Y. Ozbay (2012) ``Revealed Attention,'' \textit{American
Economic Review} 102.5: 2183-2205.

\bibitem{McLachlan}McLachlan, Geoffrey J. and David Peel (2000) ``Finite
Mixture Models,'' Wiley, New York.

\bibitem{Rhodes2010}Rhodes, John A. (2010) \textquotedblleft A Concise
Proof of Kruskal's Theorem on Tensor Decomposition,\textquotedblright{}
\textit{Linear Algebra and its Applications} 432: 1818-1824.

\bibitem{sam47}Samuelson, Paul A. (1947) ``Foundations of Economic
Analysis,'' Harvard University Press, Cambridge, Massachusetts

\bibitem{tur22}Turansick, Christopher (2022) \textquotedblleft Identification
in the Random Utility Model,\textquotedblright{} \emph{Journal of
Economic Theory} 203,105489.

\bibitem{yil22}Yildiz, Kemal (2023) \textquotedblleft Foundations
of Self-Progressive Choice Models,\textquotedblright{} \emph{arXiv}:
2212.13449 {[}econ.TH{]}.

\end{thebibliography}
\end{document}